\newcommand{\dotminus}{\mathbin{\text{\@dotminus}}}
\newcommand{\@dotminus}{%
  \ooalign{\hidewidth\raise1ex\hbox{.}\hidewidth\cr$\m@th-$\cr}%
}
\NewDocumentCommand\Calf{}{\textbf{calf}}
\title{A Verified Cost Analysis of Joinable Red-Black Trees}
\author{Runming Li}
\email{runmingl@andrew.cmu.edu}
\affiliation{
  \institution{Carnegie Mellon University}
  \streetaddress{5000 Forbes Ave.}
  \city{Pittsburgh}
  \state{PA}
  \postcode{15213}
  \country{USA}
}
\author{Harrison Grodin}
\email{hgrodin@cs.cmu.edu}
\affiliation{
  \institution{Carnegie Mellon University}
  \streetaddress{5000 Forbes Ave.}
  \city{Pittsburgh}
  \state{PA}
  \postcode{15213}
  \country{USA}
}
\author{Robert Harper}
\email{rwh@cs.cmu.edu}
\affiliation{
  \institution{Carnegie Mellon University}
  \streetaddress{5000 Forbes Ave.}
  \city{Pittsburgh}
  \state{PA}
  \postcode{15213}
  \country{USA}
}
\begin{document}

\begin{abstract}
  Ordered sequences of data, specified with a \emph{join} operation to combine sequences, serve as a foundation for the implementation of parallel functional algorithms.
  This abstract data type can be elegantly and efficiently implemented using balanced binary trees, where a join operation is provided to combine two trees and rebalance as necessary.
  In this work, we present a verified implementation and cost analysis of joinable red-black trees in \Calf{}, a dependent type theory for cost analysis.
  We implement red-black trees and auxiliary intermediate data structures in such a way that all correctness invariants are intrinsically maintained.
  Then, we describe and verify precise cost bounds on the operations, making use of the red-black tree invariants.
  Finally, we implement standard algorithms on sequences using the simple join-based signature and bound their cost in the case that red-black trees are used as the underlying implementation.
  All proofs are formally mechanized using the embedding of \Calf{} in the Agda theorem prover.
\end{abstract}

\maketitle

\section{Introduction}\label{sec:intro}

Ordered sequences of data are essential to the efficient implementation of parallel functional algorithms \cite{ab-algorithms}.
One common presentation of the signature for ordered sequences containing elements of type $\alpha$ is given in \cref{fig:seq-sig}.
This signature provides an abstract type $\seqty$ along with three operations:
\begin{enumerate}
  \item A constructor, $\seqempty$, that represents the empty sequence containing no data of type $\alpha$.
  \item A constructor, $\seqjoin$, that appends two sequences with an element of type $\alpha$ in between.
  \item
    A destructor, $\seqrec[\rho]$, that recurs over a sequence to produce an element of type $\rho$.
    An $\seqempty$ sequence is mapped to the argument of type $\rho$;
    a sequence $\seqjoin(s_1, a, s_2)$ is destructed using the argument of type $$\seqty \to \rho \to \alpha \to \seqty \to \rho \to \rho,$$
    plugging $s_1$ and $s_2$ in for the sequence arguments, $a$ in for the $\alpha$ argument, and the recursive calls in for the $\rho$ arguments.
\end{enumerate}
These three operations give rise to implementations of all algorithms on ordered sequences of data; some examples are shown in \cref{code:seq-examples}.

\begin{figure}
  \iblock{
    \mrow{\mathbf{type}~\seqty}
    \mrow{\seqempty : \seqty}
    \mrow{\seqjoin : \seqty \to \alpha \to \seqty \to \seqty}
    \mrow{\seqrec[\rho] : \rho \to (\seqty \to \rho \to \alpha \to \seqty \to \rho \to \rho) \to \seqty \to \rho}
  }
  \caption{Signature for ordered sequences containing elements of type $\alpha$.}
  \label{fig:seq-sig}
\end{figure}

Many implementations of this signature are possible, using data structures such as lists and trees.
When trees are used, the data in the sequence is taken to be the in-order traversal of the tree.
For parallel efficiency, balanced trees are a sensible choice \cite{blelloch-greiner:1995}: if the recursor $\seqrec[\rho]$ performs both recursive calls in parallel, it is worthwhile to rebalance during a $\seqjoin$ in preparation for an efficient use of $\seqrec[\rho]$ later.
As studied by \citet{Blelloch2016,Blelloch2022} and \citet{Sun2019}, when sequences are implemented as balanced binary trees, implementations of common auxiliary functions on sequences have efficient sequential and parallel cost.
For example, sequences may be used as an implementation of finite sets when the stored data is sorted.
Then, using $\seqempty$, $\seqjoin$, and $\seqrec$, bulk set operations such as union and intersection can be implemented with polylogarithmic span.

\begin{figure}
  \iblock{
    \mrow{\codename{Sum} : \seqty[\nat] \to \nat}
    \mrow{\codename{Sum} = \seqrec[\nat]~0~(\lambda~\textunderscore~n_1~~n~\textunderscore~n_2.\ n_1 + n + n_2)}
  }

  \iblock{
    \mrow{\codename{Map} : (\alpha \to \beta) \to \seqty[\alpha] \to \seqty[\beta]}
    \mrow{\codename{Map}~f = \seqrec[{\seqty[\beta]}]~\seqempty~(\lambda~\textunderscore~s_1~a~\textunderscore~s_2.\ \seqjoin~s_1~(f~a)~s_2)}
  }

  \iblock{
    \mrow{\codename{Reverse} : \seqty \to \seqty}
    \mrow{\codename{Reverse} = \seqrec[\seqty]~\seqempty~(\lambda~\textunderscore~s_1~a~\textunderscore~s_2.\ \seqjoin~s_2~a~s_1)}
  }

  \caption{Sample implementations of auxiliary functions on sequences, in terms of $\seqempty$, $\seqjoin$, and $\seqrec[\rho]$.}
  \label{code:seq-examples}
\end{figure}

\subsection{Red-black trees}\label{sec:intro-rbt}

Here, we consider the \emph{red-black tree} (RBT) data structure \cite{guibas-sedgewick:1978,okasaki:1999}, a flavor of binary search tree with an elegant functional description and cost analysis.
For our purposes, a binary tree is inductive data structure where each inhabitant is either a \emph{leaf} node carrying no data or a \emph{node} carrying a key and two other binary tree children.
A red-black tree is a binary tree satisfying the following invariants:
\begin{enumerate}\label{sec:invariants}
  \item every node is colored either $\colorred$ or $\colorblack$;
  \item every leaf is considered $\colorblack$;
  \item both children of a $\colorred$-colored node must be colored $\colorblack$;
  \item the number of $\colorblack$ nodes on any path from the root to a leaf (excluding the leaf), called the \emph{black height} of the tree, is the same.
\end{enumerate}
Following \citet{Blelloch2016,Blelloch2022}, we do not require that the root of a red-black tree be colored black.
In \cref{fig:rbt-example}, we show a sample red-black tree with black height of 1.

\begin{figure}
  \begin{tikzpicture}[nodes={draw,circle}, level distance=2\baselineskip,
    level 1/.style={sibling distance=3cm},
    level 2/.style={sibling distance=1.5cm},
    level 3/.style={sibling distance=0.75cm}]
    \node[fill=RBTRed] {$x_2$}
      child { node[fill=RBTBlack] {$x_1$}
        child { node[fill=RBTRed] {$x_0$}
          child { node[rectangle, fill=RBTBlack] {} }
          child { node[rectangle, fill=RBTBlack] {} }
        }
        child { node[rectangle, fill=RBTBlack] {} }
      }
      child { node[fill=RBTBlack] {$x_4$}
        child { node[fill=RBTRed] {$x_3$}
          child { node[rectangle, fill=RBTBlack] {} }
          child { node[rectangle, fill=RBTBlack] {} }
        }
        child { node[fill=RBTRed] {$x_5$}
          child { node[rectangle, fill=RBTBlack] {} }
          child { node[rectangle, fill=RBTBlack] {} }
        }
      }
      ;
  \end{tikzpicture}
  \caption{Sample red-black tree with black height of 1. Leaves are depicted as black squares, and nodes are depicted as red or black circles annotated with a key.}
  \label{fig:rbt-example}
\end{figure}

Traditionally, red-black trees have been used as binary search trees, storing data in sorted order.
Then, the primitive operations are insertion, lookup, and deletion, all of which have similar implementations.
However, as discussed by \citet{Blelloch2016,Blelloch2022}, this causes algorithms implemented using red-black trees to have poor parallel efficiency, since operations must be performed one-at-a-time.
Instead, \opcit{}, a $\seqjoin$ operation for red-black trees is given, combining two trees with a middle key and rebalancing as necessary to meet the red-black invariants and preserve the in-order traversal ordering.
In \cref{fig:rbts-smaller-join}, we show two sample red-black trees $t_1$ and $t_2$ which, when joined with $x_5$ in the middle, produce the tree $t$.

\begin{figure}
  \begin{tikzpicture}[nodes={draw,circle}, level distance=2\baselineskip,
    level 1/.style={sibling distance=2cm},
    level 2/.style={sibling distance=1cm},
    level 3/.style={sibling distance=0.5cm}]
    \node[draw=none] at (0, 1) {$t_1$};
    \node[draw=none] at (2.5, 1) {$a$};
    \node[draw=none] at (5, 1) {$t_2$};
    \node[fill=RBTBlack] at (0, 0) {$x_1$}
      child { node[fill=RBTBlack] {$x_0$}
        child { node[rectangle, fill=RBTBlack] {} }
        child { node[rectangle, fill=RBTBlack] {} }
      }
      child { node[fill=RBTRed] {$x_3$}
        child { node[fill=RBTBlack] {$x_2$}
          child { node[rectangle, fill=RBTBlack] {} }
          child { node[rectangle, fill=RBTBlack] {} }
        }
        child { node[fill=RBTBlack] {$x_4$}
          child { node[rectangle, fill=RBTBlack] {} }
          child { node[rectangle, fill=RBTBlack] {} }
        }
      }
      ;
    \node[fill=white!30] at (2.5, 0) {$x_5$}
      ;
    \node[fill=RBTBlack] at (5, 0) {$x_6$}
      child { node[rectangle, fill=RBTBlack] {} }
      child { node[fill=RBTRed] {$x_7$}
        child { node[rectangle, fill=RBTBlack] {} }
        child { node[rectangle, fill=RBTBlack] {} }
      }
      ;
  \end{tikzpicture}
  \vspace{1em}
  $$t = \seqjoin(t_1, a, t_2)$$
  \begin{tikzpicture}[nodes={draw,circle}, level distance=2\baselineskip,
    level 1/.style={sibling distance=3cm},
    level 2/.style={sibling distance=1.5cm},
    level 3/.style={sibling distance=0.75cm}]
    \node[fill=RBTRed]  {$x_3$}
      child { node[fill=RBTBlack] {$x_1$}
        child { node[fill=RBTBlack] {$x_0$}
          child { node[rectangle, fill=RBTBlack] {} }
          child { node[rectangle, fill=RBTBlack] {} }
        }
        child { node[fill=RBTBlack] {$x_2$}
          child { node[rectangle, fill=RBTBlack] {} }
          child { node[rectangle, fill=RBTBlack] {} }
        }
      }
      child { node[fill=RBTBlack] {$x_5$}
        child { node[fill=RBTBlack] {$x_4$}
          child { node[rectangle, fill=RBTBlack] {} }
          child { node[rectangle, fill=RBTBlack] {} }
        }
        child { node[fill=RBTBlack] {$x_6$}
          child { node[rectangle, fill=RBTBlack] {} }
          child { node[fill=RBTRed] {$x_7$}
            child { node[rectangle, fill=RBTBlack] {} }
            child { node[rectangle, fill=RBTBlack] {} }
        }
        }
      }
      ;
  \end{tikzpicture}
  \caption{Two red-black trees $t_1$ and $t_2$ along with the tree $t$ produced when they are joined with $a = x_5$ in the middle.}
  \label{fig:rbts-smaller-join}
\end{figure}

It is well-known that red-black trees intrinsically satisfying the above invariants can be defined inductively \cite{Licata2013,weirich:2014,wang-wang-chlipala:2017}:
\begin{enumerate}
  \item A black-colored RBT with black height $0$, a leaf, may always be formed.
  \item Let $t_1$ and $t_2$ are black-colored RBTs with black height $n$, and let $a$ be a key. Then, a red-colored RBT with black height $n$ may be formed.
  \item Let $t_1$ and $t_2$ be RBTs with black height $n$, and let $a$ be a key. Then, a black-colored RBT with black height $n + 1$ may be formed.
\end{enumerate}
We will use this presentation of red-black trees in our definitions and analysis.

\subsection{Mechanized cost analysis in \Calf{}}

The cost-aware logical framework (\Calf{})~\citep{Niu2022} is a dependent type theory for verifying the sequential and parallel cost and correctness of algorithms.
\Calf{} is based on the call-by-push-value paradigm~\citep{levy:2003}, separating computations (which may have an associated cost) from values.
Computation types are elements of the universe $\tpc$, whereas value types are elements of the universe $\tpv$.
Function types are computation types, where the input type is a value type and the output type is a computation type.
In this setting, the signature for ordered sequences from \cref{fig:seq-sig} is augmented to include $\U{-}$ and $\F{-}$ type constructors, explicitly moving between value and computation types; this change is rendered in \cref{fig:seq-sig-cbpv}.
\begin{figure}
  \iblock{
    \mrow{\seqty : \tpv}
    \mrow{\seqempty : \U{\seqty}}
    \mrow{\seqjoin : \U{\seqty \to \alpha \to \seqty \to \F{\seqty}}}
    \mrow{\seqrec[\rho] : \mathsf{U}(}
    \mrow{\quad \U{\rho}}
    \mrow{\quad \to \U{\seqty \to \U{\rho} \to \alpha \to \seqty \to \U{\rho} \to \rho}}
    \mrow{\quad \to \seqty \to \rho}
    \mrow{)}
  }
  \caption{Signature for ordered sequences containing elements of value type $\alpha : \tpv$, for computation types $\rho : \tpc$.}
  \label{fig:seq-sig-cbpv}
\end{figure}

In \Calf{}, the programmer includes cost annotations within algorithms, denoting an abstract notion of cost to later analyze.
In this work, we use the usual sequential-and-parallel cost model \citep[\S 6]{Niu2022}, where a cost is a pair of the sequential work and the parallel span as natural numbers.
To annotate a program with \costbox{$c$} (sequential and parallel) cost, we write $\step[c]$.

Originally, \citet{Niu2022} studied the implementation of sequential and parallel algorithms on concrete data structures in \Calf{}.
In subsequent work, \citet{grodin-harper:2023} consider the analysis of sequential-use data structures in this setting.
Here, we begin to investigate the implementation and analysis of parallel data structures in \Calf{}.

\subsection{Contribution}

In this work, we present an implementation of sequences using joinable red-black trees in \Calf{}.
The correctness of our implementation is intrinsically verified, and we perform a separate precise cost analysis in terms of the number of recursive calls.
Following \citet{Blelloch2016,Blelloch2022}, we implement a variety of sequence functions generically in the given primitives, and we analyze the cost of a simple function in the case the underlying implementation of the sequence type is the red-black tree data structure.

Our implementation and proofs are fully mechanized in Agda~\citep{norell:2009}, in which \Calf{} is embedded~\citep{Niu2022}.
We implement the mechanization of sequences and red-black trees in \texttt{Examples/Sequence.agda} and the corresponding \texttt{Examples/Sequence} directory.

\subsection{Related work}\label{sec:related}

Join-based balanced binary trees have been studied extensively by \citet{Blelloch2016,Blelloch2022}, and the joinable framework is unified by \citet{Sun2019}.

The correctness of red-black trees with their traditional sequential operations, such as single-element insertion, have been intrinsically (and extrinsically) verified in a variety of verification environments, including Agda \citep{Licata2013,weirich:2014}, Coq \citep{Appel2011,Appel:SF3}, and Isabelle \citep{nipkow:2023}.
However, these systems do not come equipped with a notion of cost, preventing the verification of the efficiency of these algorithms:
\begin{quote}
  Coq does not have a formal time--cost model for its execution, so we cannot verify [the] logarithmic running time [of insertion and lookup on red-black trees] in Coq.
  \attrib{\citep{Appel:SF3}}
\end{quote}

In another direction, the cost analysis of sequential operations on red-black trees has been verified in a resource-aware type theory \cite{wang-wang-chlipala:2017}.
However, this work does not verify the correctness of the data structure.

In this work, we verify both the correctness and cost of joinable red-black trees using an abstract cost model in \Calf{}; further explanation of and examples in the \Calf{} framework are presented in the original work of \citet{Niu2022}.
\section{Intrinsically-correct definitions}\label{sec:algo}

In this section, we describe a binary tree data type that structurally guarantees that the red-black invariants hold.
Then, we describe how it would be used to implement the sequence signature of \cref{fig:seq-sig-cbpv}; of particular interest is the implementation of the $\seqjoin$ algorithm.
Since our definitions will be well-typed, they will be intrinsically correct.
We work in \Calf{}, an extension of call-by-push-value, in which we distinguish value types in universe $\tpv$ from computation types in universe $\tpc$.

First, we define red-black trees as an indexed inductive type, as described in \cref{sec:intro-rbt}, guaranteeing that the red-black invariants are maintained; this definition of $\typename{irbt}_\alpha$ is given in \cref{code:irbt}.
We include an index storing the in-order traversal of the tree that we will to use to guarantee that well-typed definitions implement the desired behavior, specified in terms of lists.
Additionally, we define the type $\typename{rbt}_\alpha$ as the total space of the type family $\typename{irbt}_\alpha$, storing an arbitrary color, black-height, and in-order traversal along with an indexed red-black tree with those parameters.

\begin{figure}
  \iblock{
    \mrow{\kw{data}~\typename{irbt}_\alpha : \colorty \to \nat \to \listty{\alpha} \to \tpv~\kw{where}}
    \mrow{\quad \colorbox{RBTBlack}{$\conname{leaf}$} : \irbt{\colorblack}{\zero}{\nilex}}
    \mrow{\quad \colorbox{RBTRed}{$\conname{red}$} : (\irbt{\colorblack}{n}{l_1})~(a : \alpha)~(\irbt{\colorblack}{n}{l_2})}
    \mrow{\quad\quad \to \irbt{\colorred}{n}{(\listjoin{l_1}{a}{l_2})}}
    \mrow{\quad \colorbox{RBTBlack}{$\conname{black}$} : (\irbt{y_1}{n}{l_1})~(a : \alpha)~(\irbt{y_2}{n}{l_2})}
    \mrow{\quad\quad \to \irbt{\colorblack}{\suc{n}}{(\listjoin{l_1}{a}{l_2})}}
    \mrow{}
    \mrow{\rbt : \tpv}
    \mrow{\rbt = \sum_{y : \colorty} \sum_{n : \nat} \sum_{l : \listty{\alpha}} \irbt{y}{n}{l}}
  }
  \caption{Definition of indexed red-black trees as an indexed inductive type.}
  \label{code:irbt}
\end{figure}

Given these definitions, the goal is to implement the sequence signature of \cref{fig:seq-sig-cbpv}.
We choose ${\seqty = \rbt}$, define ${\seqempty = \ret{\ileaf}}$, and naturally implement $\seqrec$ via the induction principle for $\rbt$.
It remains, then, to define a computation $$\seqjoin : \rbt \to \alpha \to \rbt \to \F{\rbt},$$
which we consider in the remainder of this section.

\subsection{The \seqjoin{} algorithm}

The algorithm itself will follow \citet{Blelloch2016}, although we must ensure that the intrinsic structural properties are valid.%
\footnote{We omit a case listed by \citet{Blelloch2022} that our verification shows is impossible to reach.}
We recall its definition in \cref{alg:join}, adapting to our notation; it is defined in terms of auxiliary functions \codename{JoinRight} (and the symmetric \codename{JoinLeft}, which we henceforth elide), which we will consider in the next section.
Informally, the algorithm proceeds as follows:
\begin{enumerate}
  \item
    If both trees have equal height, simply construct a new node without rebalancing (\cref{fig:join-equal}).
    If possible, a red node is preferable.
  \item
    Otherwise, without loss of generality, assume $t_1$ has a larger black height than $t_2$.
    Then, use the \codename{JoinRight} auxiliary function to place $t_2$ on the right spine of $t_1$, rebalancing as necessary.
    The process may cause a single red-red violation at the root of the result tree.
    In that case, recolor the root to black (\cref{fig:join-violation}); otherwise, return the valid tree.
\end{enumerate}
This algorithm performs no recursive calls aside from those within \codename{JoinRight}, so no cost annotations are required by our cost model.
It remains, then, to define the type and implementation of \codename{JoinRight}.

\begin{algorithm}
  \begin{algorithmic}
  \Require
    \Statex $t_1 : \irbt{y_1}{n_1}{l_1}$
    \Statex $a : \alpha$
    \Statex $t_2 : \irbt{y_2}{n_2}{l_2}$
  \Ensure
    \Statex $\seqjoin(t_1, a, t_2) : \F{\sum_{y : \colorty} \sum_{n : \nat} \irbt{y}{n}{(\listjoin{l_1}{a}{l_2})}}$
  \Statex
    \Switch{$\Call{Compare}{n_1, n_2}$}
      \Case{$n_1 > n_2$}
        \State $t' \gets \Call{JoinRight}{t_1, a, t_2}$
        \Switch{$t'$}
          \Case{meets the invariants}
            \State \Return $t'$
          \EndCase
          \Case{has a red-red violation on the right}
            \State $\ired{t'_1}{a'}{t'_2} \gets t'$
            \State \Return $\iblack{t'_1}{a'}{t'_2}$
          \EndCase
      \EndCase
      \Case{$n_1 < n_2$}
        \State{$\cdots$} \Comment{symmetric, in terms of $\codename{JoinLeft}$}
      \EndCase
      \Case{$n_1 = n_2$}
        \If{$y_1 = \colorblack$ and $y_2 = \colorblack$}
          \State \Return $\ired{t_1}{a}{t_2}$
        \Else
          \State \Return $\iblack{t_1}{a}{t_2}$
        \EndIf
      \EndCase
  \end{algorithmic}
  \caption{%
    \codename{Join} algorithm for red-black trees \cite{Blelloch2022}.
    Since the color and black-height outputs may be inferred, we leave them implicit for readability.
  }
  \label{alg:join}
\end{algorithm}

\begin{figure}
  \begin{tikzpicture}[nodes={draw,circle}, level distance=2em,
    level 1/.style={sibling distance=2cm},
    level 2/.style={sibling distance=1cm},
    level 3/.style={sibling distance=0.5cm}]
    \node[draw=none] at (0, 1) {$t_1$};
    \node[draw=none] at (2.5, 1) {$a$};
    \node[draw=none] at (5, 1) {$t_2$};
    \node[fill=RBTRed] at (0, 0) {$x_2$}
      child { node[fill=RBTBlack] {$x_1$}
        child { node[fill=RBTRed] {$x_0$}
          child { node[rectangle, fill=RBTBlack] {} }
          child { node[rectangle, fill=RBTBlack] {} }
        }
        child { node[rectangle, fill=RBTBlack] {} }
      }
      child { node[fill=RBTBlack] {$x_4$}
        child { node[fill=RBTRed] {$x_3$}
          child { node[rectangle, fill=RBTBlack] {} }
          child { node[rectangle, fill=RBTBlack] {} }
        }
        child { node[fill=RBTRed] {$x_5$}
          child { node[rectangle, fill=RBTBlack] {} }
          child { node[rectangle, fill=RBTBlack] {} }
        }
      }
      ;
      \node at (2.5, 0) {$x_6$};
      \node[fill=RBTBlack] at (5, 0) {$x_8$}
      child { node[fill=RBTRed] {$x_7$}
        child { node[rectangle, fill=RBTBlack] {} }
        child { node[rectangle, fill=RBTBlack] {} }
      }
      child { node[fill=RBTRed] {$x_9$}
        child { node[rectangle, fill=RBTBlack] {} }
        child { node[rectangle, fill=RBTBlack] {} }
      }
      ;
  \end{tikzpicture}
  \vspace{1em}
  $$t = \seqjoin(t_1, a, t_2)$$
  \begin{tikzpicture}[nodes={draw,circle}, level distance=2em,
    level 1/.style={sibling distance=4cm},
    level 2/.style={sibling distance=2cm},
    level 3/.style={sibling distance=1cm},
    level 4/.style={sibling distance=0.5cm},
    ]
    \node[fill=RBTBlack] {$x_6$}
      child { node[fill=RBTRed] {$x_2$}
        child { node[fill=RBTBlack] {$x_1$}
          child { node[fill=RBTRed] {$x_0$}
            child { node[rectangle, fill=RBTBlack] {} }
            child { node[rectangle, fill=RBTBlack] {} }
          }
          child { node[rectangle, fill=RBTBlack] {} }
        }
        child { node[fill=RBTBlack] {$x_4$}
          child { node[fill=RBTRed] {$x_3$}
            child { node[rectangle, fill=RBTBlack] {} }
            child { node[rectangle, fill=RBTBlack] {} }
          }
          child { node[fill=RBTRed] {$x_5$}
            child { node[rectangle, fill=RBTBlack] {} }
            child { node[rectangle, fill=RBTBlack] {} }
          }
        }
      }
      child { node[fill=RBTBlack] {$x_8$}
        child { node[fill=RBTRed] {$x_7$}
          child { node[rectangle, fill=RBTBlack] {} }
          child { node[rectangle, fill=RBTBlack] {} }
        }
        child { node[fill=RBTRed] {$x_9$}
          child { node[rectangle, fill=RBTBlack] {} }
          child { node[rectangle, fill=RBTBlack] {} }
        }
      };
  \end{tikzpicture}
  \caption{Join of two trees with equal black heights.}
  \label{fig:join-equal}
\end{figure}

\begin{figure}
  \begin{tikzpicture}[nodes={draw,circle}, level distance=2em,
    level 1/.style={sibling distance=2cm},
    level 2/.style={sibling distance=1cm},
    level 3/.style={sibling distance=0.5cm}]
    \node[draw=none] at (0, 1) {$t_1$};
    \node[draw=none] at (2.5, 1) {$a$};
    \node[draw=none] at (5, 1) {$t_2$};
    \node[fill=RBTRed] at (0, 0) {$x_2$}
      child { node[fill=RBTBlack] {$x_1$}
        child { node[fill=RBTRed] {$x_0$}
          child { node[rectangle, fill=RBTBlack] {} }
          child { node[rectangle, fill=RBTBlack] {} }
        }
        child { node[rectangle, fill=RBTBlack] {} }
      }
      child { node[fill=RBTBlack] {$x_4$}
        child { node[fill=RBTRed] {$x_3$}
          child { node[rectangle, fill=RBTBlack] {} }
          child { node[rectangle, fill=RBTBlack] {} }
        }
        child { node[fill=RBTRed] {$x_5$}
          child { node[rectangle, fill=RBTBlack] {} }
          child { node[rectangle, fill=RBTBlack] {} }
        }
      }
      ;
      \node at (2.5, 0) {$x_6$};
      \node[fill=RBTRed] at (5, 0) {$x_7$}
        child { node[rectangle, fill=RBTBlack] {} }
        child { node[rectangle, fill=RBTBlack] {} }
      ;
  \end{tikzpicture}
  \vspace{1em}
  $$t = \codename{JoinRight}(t_1, a, t_2)$$
  \begin{tikzpicture}[nodes={draw,circle}, level distance=2em,
    level 1/.style={sibling distance=4cm},
    level 2/.style={sibling distance=2cm},
    level 3/.style={sibling distance=1cm},
    level 4/.style={sibling distance=0.5cm},
    ]
    \node[fill=RBTRed] at (0, 0) {$x_2$}
      child { node[fill=RBTBlack] {$x_1$}
        child { node[fill=RBTRed] {$x_0$}
          child { node[rectangle, fill=RBTBlack] {} }
          child { node[rectangle, fill=RBTBlack] {} }
        }
        child { node[rectangle, fill=RBTBlack] {} }
      }
      child[dashed] { node[fill=RBTRed,solid] {$x_6$}
        child[solid] { node[fill=RBTBlack] {$x_4$}
          child { node[fill=RBTRed] {$x_3$}
            child { node[rectangle, fill=RBTBlack] {} }
            child { node[rectangle, fill=RBTBlack] {} }
          }
          child { node[fill=RBTRed] {$x_5$}
            child { node[rectangle, fill=RBTBlack] {} }
            child { node[rectangle, fill=RBTBlack] {} }
          }
        }
        child[solid] { node[fill=RBTBlack] {$x_7$}
          child { node[rectangle, fill=RBTBlack] {} }
          child { node[rectangle, fill=RBTBlack] {} }
        }
      }
      ;
  \end{tikzpicture}
  \vspace{1em}
  $$t = \seqjoin(t_1, a, t_2)$$
  \begin{tikzpicture}[nodes={draw,circle}, level distance=2em,
    level 1/.style={sibling distance=4cm},
    level 2/.style={sibling distance=2cm},
    level 3/.style={sibling distance=1cm},
    level 4/.style={sibling distance=0.5cm},
    ]
    \node[fill=RBTBlack] at (0, 0) {$x_2$}
      child { node[fill=RBTBlack] {$x_1$}
        child { node[fill=RBTRed] {$x_0$}
          child { node[rectangle, fill=RBTBlack] {} }
          child { node[rectangle, fill=RBTBlack] {} }
        }
        child { node[rectangle, fill=RBTBlack] {} }
      }
      child { node[fill=RBTRed] {$x_6$}
        child { node[fill=RBTBlack] {$x_4$}
          child { node[fill=RBTRed] {$x_3$}
            child { node[rectangle, fill=RBTBlack] {} }
            child { node[rectangle, fill=RBTBlack] {} }
          }
          child { node[fill=RBTRed] {$x_5$}
            child { node[rectangle, fill=RBTBlack] {} }
            child { node[rectangle, fill=RBTBlack] {} }
          }
        }
        child { node[fill=RBTBlack] {$x_7$}
          child { node[rectangle, fill=RBTBlack] {} }
          child { node[rectangle, fill=RBTBlack] {} }
        }
      }
      ;
  \end{tikzpicture}
  \caption{Recoloring the root of a result tree from \codename{JoinRight} due to a red-red violation on the right, indicated by a dashed line.}
  \label{fig:join-violation}
\end{figure}

\subsection{The \codename{JoinRight} auxiliary algorithm}

As discussed previously, the \codename{JoinRight} algorithm has a relaxed specification: rather than guaranteeing a valid red-black tree, it allows a single red-red violation between the root of the result and its right child to propagate upwards.
We allow this violation only in the case that the first tree had a red root to begin with.

In order to represent this condition, we define an auxiliary data structure, an \emph{almost-right red-black tree}, abbreviated $\typename{arrbt}$, in \cref{code:arrbt}; our terminology is inspired by the ``almost tree'' of \citet{weirich:2014}.
A well-formed red-black tree always counts as an almost-right red-black tree; a $\colorred$-colored almost-right red-black tree may also be a violation, with a $\colorblack$-colored left child, key data, and another $\colorred$-colored right child.
Notably, a red-red violation for an almost-right red-black tree can only happen on the right spine, and only when the first tree originally had a red root.
We thereby define $\typename{arrbt}$ to be indexed by another color parameter called $\typename{leftColor}$, representing the color of the left tree from which it was created.
Therefore, when a violation happens, the $\typename{leftColor}$ must be $\colorred$.
Given this definition, we wish to define a computation
\iblock{
  \mrow{\codename{JoinRight} :}
  \mrow{\quad (\irbt{y_1}{n_1}{l_1})~(a : \alpha)~(\irbt{y_2}{n_2}{l_2}) \to}
  \mrow{\quad n_1 > n_2 \to}
  \mrow{\quad \F{\arrbt{y_1}{n_1}{(\listjoin{l_1}{a}{l_2})}}.}
}\noindent
Observe that the black height and left color of the result must match the first tree.
Also, notice that given such a definition of \codename{JoinRight}, the $\seqjoin$ implementation of \cref{alg:join} is well-typed and therefore correct.

\begin{figure}
  \iblock{
    \mrow{\kw{data}~\typename{arrbt}_\alpha : \colorty \to \nat \to \listty{\alpha} \to \tpv~\kw{where}}
    \mrow{\quad \conname{valid} : (\conname{leftColor} : \colorty)~(\irbt{y}{n}{l})}
    \mrow{\quad\quad \to \arrbt{\conname{leftColor}}{n}{l}}
    \mrow{\quad \conname{violation} : (\irbt{\colorblack}{n}{l_1})~(a : \alpha)~(\irbt{\colorred}{n}{l_2})}
    \mrow{\quad\quad \to \arrbt{\colorred}{n}{(\listjoin{l_1}{a}{l_2})}}
  }
  \caption{Definition of almost-right red-black trees, allowing for a red-red violation on the right when the color parameter (the color of the left tree from which it was created) is $\colorred$, as an indexed inductive type.}
  \label{code:arrbt}
\end{figure}

\begin{lemma}\label{lem:join}
  For all well-typed $t_1$, $a$, and $t_2$, it is the case that $\Call{Join}{t_1, a, t_2}$ is a valid red-black tree.
\end{lemma}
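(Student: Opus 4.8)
The plan is to exploit the intrinsic nature of the $\typename{irbt}$ family: a term typed as $\irbt{y}{n}{(\listjoin{l_1}{a}{l_2})}$ is \emph{by construction} a red-black tree satisfying every invariant, whose in-order traversal is $\listjoin{l_1}{a}{l_2}$. Consequently, the lemma reduces to exhibiting that $\Call{Join}{t_1, a, t_2}$ inhabits the advertised output type $\F{\sum_{y : \colorty}\sum_{n : \nat}\irbt{y}{n}{(\listjoin{l_1}{a}{l_2})}}$; correctness and index-preservation are then corollaries of type-checking, and no extrinsic invariant-chasing is required. I would assume the stated type of $\codename{JoinRight}$ (and its mirror $\codename{JoinLeft}$), whose construction is the subject of the following section, and argue by case analysis on $\Call{Compare}{n_1, n_2}$ following \cref{alg:join}.

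First I would dispatch the case $n_1 = n_2 =: n$, splitting further on the colors of the two roots. If $y_1 = y_2 = \colorblack$, then $t_1 : \irbt{\colorblack}{n}{l_1}$ and $t_2 : \irbt{\colorblack}{n}{l_2}$, so the $\conname{red}$ constructor applies and $\ired{t_1}{a}{t_2} : \irbt{\colorred}{n}{(\listjoin{l_1}{a}{l_2})}$ carries exactly the required index. Otherwise I use $\conname{black}$, obtaining $\iblack{t_1}{a}{t_2} : \irbt{\colorblack}{\suc{n}}{(\listjoin{l_1}{a}{l_2})}$; this remains admissible because the black-height component of the output is existentially quantified, so only the in-order index is constrained.

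For the asymmetric cases I would treat $n_1 > n_2$ (the case $n_1 < n_2$ being symmetric via $\codename{JoinLeft}$). Invoking $\codename{JoinRight}$ yields a term $t' : \arrbt{y_1}{n_1}{(\listjoin{l_1}{a}{l_2})}$, and I case-analyze its two constructors. If $t'$ is formed by $\conname{valid}$, its payload is already some $t : \irbt{y}{n_1}{(\listjoin{l_1}{a}{l_2})}$, which I return unchanged. If $t'$ is formed by $\conname{violation}$, the type of that constructor forces the left-color index to be $\colorred$ and records the index as $\listjoin{l'_1}{a'}{l'_2}$, where the stored children satisfy $t'_1 : \irbt{\colorblack}{n_1}{l'_1}$ and $t'_2 : \irbt{\colorred}{n_1}{l'_2}$; recoloring the root by applying $\conname{black}$ gives $\iblack{t'_1}{a'}{t'_2} : \irbt{\colorblack}{\suc{n_1}}{(\listjoin{l'_1}{a'}{l'_2})}$, resolving the sole red-red violation while preserving the in-order index.

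I expect the only genuine subtlety within this lemma to be the $\conname{violation}$ branch: one must confirm that recoloring the root indeed repairs the red-red conflict—guaranteed because $\conname{violation}$ pins the offending right child to $\colorred$ and its sibling to $\colorblack$ at matching black height $n_1$—and that the index stored by $\conname{violation}$ is definitionally the combined traversal $\listjoin{l_1}{a}{l_2}$. The substantive difficulty of the whole development, namely the recursion that descends the right spine and propagates rebalancing, lives entirely inside $\codename{JoinRight}$, which this lemma takes as a black box; once its type is established in the next section, the present statement follows by the finite case analysis above.
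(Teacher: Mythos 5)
Your proposal is correct and takes essentially the same approach as the paper: treat \codename{JoinRight} as a black box with its stated dependent type, split on the comparison of $n_1$ and $n_2$, choose the $\conname{red}$ or $\conname{black}$ constructor by root colors in the equal-height case, and repair the $\conname{violation}$ case by recoloring the root with the $\conname{black}$ constructor. Your rendering is simply a more type-theoretically explicit version of the paper's argument, making the ``correctness by type-checking'' reduction and the index bookkeeping visible.
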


\begin{proof}
  We assume a well-typed implementation of \codename{JoinRight}, which is provided in \cref{alg:join-right} and proved correct in \cref{lem:left-black-red}.

  If $n_1 > n_2$, the call $\Call{JoinRight}{t_1, a, t_2}$ is made, returning an almost-right red-black tree.
  If this tree is valid, this tree is returned, as desired.
  Otherwise, if it has a red-red violation between the root and its right child, then the root is changed to black, causing all the red-black invariants to be satisfied.

  If $n_1 < n_2$, then a symmetric argument can be made.

  If $n_1 = n_2$, then the two trees may be joined by a red node if both are black or a black node otherwise.
  In either case, it forms a valid red-black tree.
\end{proof}

Now, it remains to give the \codename{JoinRight} algorithm to fulfill this specification.
Here, we diverge slightly from \citet{Blelloch2016,Blelloch2022} for ease of verification.
The algorithm presented \opcit{} allows for a triple-red violation on the right spine, albeit only in the base case.
Moreover, as noted by \citet[\S 3.2.2]{Sun2019}, the triple-red issue must be resolved one recursive call after the base case.
Therefore, we trade the more concise code and more complex specification for slightly more verbose code with a simpler specification.
We give our definition of \codename{JoinRight} in \cref{alg:join-right}.

\begin{algorithm}
  \begin{algorithmic}
  \Require
    \Statex $t_1 : \irbt{y_1}{n_1}{l_1}$
    \Statex $a : \alpha$
    \Statex $t_2 : \irbt{y_2}{n_2}{l_2}$
    \Statex $n_1 > n_2$
  \Ensure
    \Statex $\codename{JoinRight}(t_1, a, t_2) : \F{\arrbt{y_1}{n_1}{(\listjoin{l_1}{a}{l_2})}}$
  \Statex
    \Switch{$t_1$}
    \Case{$t_1 = \ired{t_{1,1}}{a_1}{t_{1,2}}$} \Comment{Case I.}
      \Step 1
      \State $\avalid{t'} \gets \Call{JoinRight}{t_{1,2}, a, t_2}$
      \Switch{$y'$, the color of $t'$}
      \Case{$y' = \colorred$}
        \State \Return $\aviolation{t_{1,1}}{a_1}{t'}$
      \EndCase
      \Case{$y' = \colorblack$}
        \State \Return $\avalid{\ired{t_{1,1}}{a_1}{t'}}$
      \EndCase{}
    \EndCase
    \Case{$t_1 = \iblack{t_{1,1}}{a_1}{t_{1,2}}$}
      \Switch{compare $n_1$ and $n_2$}
      \Case{$n_1 = n_2 + 1$}
        \Switch{$t_2$}
        \Case{$t_2 = \ired{t_{2,1}}{a_2}{t_{2,2}}$} \Comment{Case II.}
          \State \Return $\avalid{\ired{t_1}{a}{\iblack{t_{2,1}}{a_2}{t_{2,2}}}}$
        \EndCase
        \Case{$t_2 = \iblack{t_{2,1}}{a_2}{t_{2,2}}$}
          \Switch{$t_{1,2}$}
          \Case{$t_{1,2} = \ired{t_{1,2,1}}{a_{1,2}}{t_{1,2,2}}$} \Comment{Case III.}
            \State $x_1 \gets \iblack{t_{1,1}}{a_1}{t_{1,2,1}}$
            \State $x_2 \gets \iblack{t_{1,2,2}}{a}{t_2}$
            \State \Return $\avalid{\ired{x_1}{a_{1,2}}{x_2}}$
          \EndCase
          \Case{$t_{1,2} = \iblack{t_{1,2,1}}{a_{1,2}}{t_{1,2,2}}$} \Comment{Case IV.}
            \State $x_2 \gets \ired{t_{1,2}}{a}{t_2}$
            \State \Return $\avalid{\iblack{t_{1,1}}{a_1}{x_2}}$
          \EndCase
        \EndCase
      \EndCase
      \Case{$n_1 > n_2 + 1$}
        \Step 1
        \State $r \gets \Call{JoinRight}{t_{1,2}, a, t_2}$
        \Switch{$r$}
        \Case{$r = \avalid{t'}$}
          \State \Return $\avalid{\iblack{t_{1,1}}{a_1}{t'}}$ \Comment{Case V.}
        \EndCase
        \Case{$r = \aviolation{t'_1}{a'}{\ired{t'_{2,1}}{a'_2}{t'_{2,2}}}$}
          \State $x_1 \gets \iblack{t_{1,1}}{a_1}{t'_1}$
          \State $x_2 \gets \iblack{t'_{2,1}}{a'_2}{t'_{2,2}}$
          \State \Return $\avalid{\ired{x_1}{a'}{x_2}}$ \Comment{Case VI.}
        \EndCase
      \EndCase
    \EndCase
  \end{algorithmic}
  \caption{
    \codename{JoinRight} algorithm for red-black trees, based on \citet{Blelloch2022}.
    Cases are exhaustive, by the definitions of $\typename{irbt}_\alpha$ and $\typename{arrbt}_\alpha$, with the outer induction on $t_1$.
    Cost annotations are \costbox{highlighted}.
  }
  \label{alg:join-right}
\end{algorithm}

We claim that \codename{JoinRight} is a well-typed program with exhaustive casework, by the definitions of $\typename{irbt}_\alpha$ and $\typename{arrbt}_\alpha$.
Although our Agda mechanization verifies this fact, we include an informal proof below.

\begin{lemma} \label{lem:left-black-red}
  For all appropriate inputs $t_1$, $a$, and $t_2$, the call $\Call{JoinRight}{t_1, a, t_2}$ returns an almost-right red-black tree with black height $n_1$.
  In other words:
  \begin{enumerate}
    \item \label{lem:left-black} If $t_1$ is colored $\colorblack$, then $\Call{JoinRight}{t_1, a, t_2}$ is a valid red-black tree with the same black height as $t_1$.
    \item \label{lem:left-red} If $t_1$ is colored $\colorred$, then $\Call{JoinRight}{t_1, a, t_2}$ is an almost-right red-black tree (valid or with a red-red violation) with the same black height as $t_1$.
  \end{enumerate}
\end{lemma}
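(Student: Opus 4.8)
The plan is to prove the two-part statement by structural induction on the first tree $t_1$, with the case analysis mirroring exactly the branches of \cref{alg:join-right}. The two clauses are not independent: together they constitute the strengthened induction hypothesis, and they correspond precisely to the $\conname{leftColor}$ index of $\typename{arrbt}_\alpha$. Since the $\conname{violation}$ constructor produces only $\conname{leftColor} = \colorred$, a black-rooted $t_1$ can yield only a $\conname{valid}$ result (clause \ref{lem:left-black}), whereas a red-rooted $t_1$ may yield either (clause \ref{lem:left-red}). First I would dispatch the leaf case, which is vacuous: the precondition $n_1 > n_2 \geq 0$ forces $n_1 \geq 1$, so $t_1$ cannot be a $\ileaf$, and the induction reduces to the red- and black-rooted cases.

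For the red case (Case I), $t_1 = \ired{t_{1,1}}{a_1}{t_{1,2}}$ has both children black of black height $n_1$. I would recurse on $t_{1,2}$, whose black height $n_1$ still exceeds $n_2$; since $t_{1,2}$ is black, clause \ref{lem:left-black} of the induction hypothesis guarantees the recursive result is a valid tree of black height $n_1$, which is what justifies the $\avalid{t'}$ pattern in the algorithm. Splitting on the color of $t'$ then produces either a $\conname{violation}$ (when $t'$ is red, pairing a red-red node on the right with $\conname{leftColor} = \colorred$) or a freshly-built red node (when $t'$ is black), both of black height $n_1$; this establishes clause \ref{lem:left-red}.

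For the black case, $t_1 = \iblack{t_{1,1}}{a_1}{t_{1,2}}$ with $n_1 = \suc{m}$ and children of black height $m$. When $n_1 = n_2 + 1$ (so $m = n_2$) the algorithm terminates directly (Cases II--IV) via a local rotation and recoloring, and I would verify in each subcase that the reassembled node has black height $n_1$ and is $\conname{valid}$ --- the arithmetic being a routine check that each black-node and red-node constructor receives children of matching height. When $n_1 > n_2 + 1$ the algorithm recurses on $t_{1,2}$ (of black height $m > n_2$, so the precondition holds); here $t_{1,2}$ may be either color, so the full generality of clause \ref{lem:left-red} is needed, and I would split on whether the result is $\conname{valid}$ (Case V) or a $\conname{violation}$ (Case VI). In the latter the $\conname{leftColor} = \colorred$ index records that $t_{1,2}$ was red, and the double-black rebuild resolves the violation into a valid tree of black height $n_1$. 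In every black subcase the output is $\conname{valid}$, yielding clause \ref{lem:left-black}.

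The main obstacle is calibrating this refinement so that all of the algorithm's pattern matches are justified and the height bookkeeping closes. The strengthening must be tight in both directions: strong enough that the recursion in Case I is known to return a $\conname{valid}$ tree and never a violation, yet permissive enough that Case VI's violation is genuinely reachable and correctly attributed to a red left subtree through the $\conname{leftColor}$ index. In the intrinsic setting most of the black-height and in-order-traversal constraints (the latter requiring associativity of $\listjoin$ in the cases that rearrange the tree shape) are discharged automatically by type-checking; the informal content is confirming exhaustiveness and that the two clauses are exactly what the recursive calls consume and produce.
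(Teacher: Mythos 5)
Your proposal is correct and takes essentially the same route as the paper: a simultaneous induction on $t_1$ with the two clauses serving as the strengthened hypothesis, mirroring Cases I--VI of \cref{alg:join-right}, with the black-rooted induction hypothesis justifying the $\avalid{t'}$ pattern in Case~I and the red clause covering the violation resolved in Case~VI. Your additional observations --- the vacuity of the leaf case under $n_1 > n_2$, the correspondence of the two clauses to the $\conname{leftColor}$ index, and the need for associativity of $\listjoin$ in the rotating cases --- are accurate refinements the paper leaves implicit.
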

\begin{proof}
  We prove both items simultaneously by induction on $t_1$, following the structure of the code.
  \begin{enumerate}[label=\Roman*.]
    \item
      If $t_1$ is colored $\colorred$, we must prove \cref{lem:left-red}, and its children $t_{1,1}$ and $t_{1,2}$ must both be colored $\colorblack$.
      Moreover, $n_1 = n_{1,1} = n_{1,2} > n_2$.
      By induction, the result of the recursive call to $\Call{JoinRight}{t_{12}, a, t_2}$, $t'$, gives a valid red-black tree with black height $n_{1,2}$.
      We always return a $\colorred$ node whose left child is the $\colorblack$ subtree $t_{1,1}$ and whose right child is $t'$, which could be either red or black.
      Depending on the color of $t'$, we will either get a valid $\colorred$ tree or a red-red violation on the right spine, both of which are allowed as the result for \cref{lem:left-red}.

    \item
      If $t_1$ is colored $\colorblack$, we must prove \cref{lem:left-black}.
      If $n_1 = n_2 + 1$ and $t_2$ is colored $\colorred$, then $n_1 = n_{2,1} + 1 = n_{2,2} + 1$.
      Therefore, the returned tree is valid with black height $n_1$.

    \item
      This case is similar to the previous case, but $t_2$ is colored $\colorblack$.
      If $t_{1,2}$ is colored $\colorred$, then $n_{1,1} = n_{1,2} = n_{2,1} = n_{2,2} = n_2$.
      Therefore, the returned tree is valid with black height $n_1$.

    \item
      This case is similar to the previous case, but $t_{1,2}$ is colored $\colorblack$.
      Thus, $n_{1,1} = n_{1,2} = n_{1,2,1} = n_{1,2,2} = n_2$, so the returned tree is valid with black height $n_1$.

    \item
      If $t_1$ is colored $\colorblack$, we must prove \cref{lem:left-black}.
      Suppose $n_1 > n_2 + 1$.
      Then, $n_{1,1} + 1 = n_{1,2} + 1 = n_1 > n_2$.
      Regardless of the color of $t_{1,2}$, the inductive hypothesis applies.
      If the result $r$ is a valid red-black tree $t'$, then $t_{1,1}$ and $a_1$ can be combined at a $\colorblack$ node to create a valid red-black tree with black height $n_1$.

    \item
      This case is similar to the previous case, but the result $r$ indicates a red-red violation between the root and its right child.
      Then, a left-rotation is performed to give back a valid $\colorred$-colored red-black tree with black height $n_1$.
  \end{enumerate}
  In every case, the in-order traversal of the tree is clearly preserved, by inspection of the left-to-right order of the subtrees and keys.
\end{proof}

Thus, we have described the \seqjoin{} algorithm on red-black trees and intrinsically verified its correctness.
Based on the correctness of \codename{JoinRight}, we also get a straightforward bound on the black height of the tree produced by \seqjoin{}, matching the result of \citet{Blelloch2016,Blelloch2022}.

\begin{theorem}\label{thm:bheight-bound}
  Let $t_1$ and $t_2$ be red-black trees with black heights $n_1$ and $n_2$, respectively.
  Then, the black height of the red-black tree returned by $\seqjoin(t_1, a, t_2)$ is either $\max(n_1, n_2)$ or $1 + \max(n_1, n_2)$.
\end{theorem}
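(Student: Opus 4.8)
The plan is to proceed by case analysis on the comparison of $n_1$ and $n_2$ performed at the top of $\Call{Join}{t_1, a, t_2}$ (\cref{alg:join}), reading the black height of the output directly off the constructor returned in each branch. The essential ingredient is \cref{lem:left-black-red}: it guarantees that, in the $n_1 > n_2$ branch, the call $\Call{JoinRight}{t_1, a, t_2}$ returns an almost-right red-black tree of black height exactly $n_1$, and symmetrically that the elided $\codename{JoinLeft}$ returns one of black height exactly $n_2$.

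First I would dispatch the $n_1 = n_2$ case, where $\max(n_1, n_2) = n_1$. Here the algorithm returns either $\ired{t_1}{a}{t_2}$ or $\iblack{t_1}{a}{t_2}$, and by the constructors of $\typename{irbt}_\alpha$ (\cref{code:irbt}) a $\colorred$ node over two black-height-$n_1$ children has black height $n_1 = \max(n_1, n_2)$, whereas a $\colorblack$ node has black height $n_1 + 1 = 1 + \max(n_1, n_2)$. Next I would treat the $n_1 > n_2$ case, where $\max(n_1, n_2) = n_1$. Invoking \cref{lem:left-black-red}, the result $t'$ of $\Call{JoinRight}{t_1, a, t_2}$ is an $\arrbt{y_1}{n_1}{\cdots}$. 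In the $\conname{valid}$ subcase the returned tree is the underlying $\typename{irbt}_\alpha$ of black height $n_1 = \max(n_1, n_2)$; in the $\conname{violation}$ subcase the algorithm reinterprets $t'$ as a $\colorred$ node with two black-height-$n_1$ children and recolors the root to $\colorblack$, which by the $\conname{black}$ constructor yields black height $n_1 + 1 = 1 + \max(n_1, n_2)$. The $n_1 < n_2$ case is symmetric, via $\codename{JoinLeft}$ and $\max(n_1, n_2) = n_2$.

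I expect no substantive obstacle: the statement is essentially forced by the black-height indices carried in the types $\typename{irbt}_\alpha$ and $\typename{arrbt}_\alpha$, once \cref{lem:left-black-red} pins down the height of the auxiliary calls. The one point meriting care is the observation that recoloring a $\conname{violation}$ root from $\colorred$ to $\colorblack$ raises the black height by exactly one — taking the output from $\max(n_1, n_2)$ to $1 + \max(n_1, n_2)$ — which follows immediately from the $\conname{black}$ constructor. In the mechanization, this bookkeeping is largely discharged by type checking, since the indices are maintained intrinsically.
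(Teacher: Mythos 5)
Your proof is correct and takes essentially the same route as the paper, which derives \cref{thm:bheight-bound} directly from the correctness of \codename{JoinRight} (\cref{lem:left-black-red}) via case analysis on the branches of \cref{alg:join}, with the recoloring of a $\conname{violation}$ root accounting for the possible $1 + \max(n_1, n_2)$ outcome exactly as you describe. The paper offers no further written proof — the bookkeeping is discharged in the mechanization by the intrinsic black-height indices, just as you anticipate — so your argument is precisely the intended one.
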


\Cref{thm:bheight-bound} does not affect the cost analysis of $\seqjoin$, but it does impact cost analysis for algorithms that use $\seqjoin$; therefore, it is also mechanized in the implementation.

For the purpose of correctness analysis, the cost annotations did not play a role.
In the next section, we will state and prove cost bounds on the \seqjoin{} and \codename{JoinRight} algorithms.
\section{Cost analysis}\label{sec:cost}

To analyze the cost of algorithms in \Calf{}, we attempt to bound the number of calls to $\step$.
In the subsequent development, we will count informally; in our mechanization, we use the definition $\isBounded{A}{e}{c}$ and associated lemmas from the \Calf{} standard library \citep{Niu2022}.
From this section onward, we annotate all mechanized results with their name as defined in the Agda implementation using the typewriter font, \eg{} \Mech{joinRight/is-bounded}.

\subsection{Cost of \codename{JoinRight}}

If a red-black tree has black height $n$, it has true height bounded by at most $2n + 1$: on top of every $\colorblack$ node, an additional $\colorred$ node may (optionally) be placed without affecting the black height.
Similar, then, to how an almost-right red-black tree weakens the invariants in the case of a $\colorred$ root, so too must the cost analysis weaken the cost bound given a $\colorred$ root.

\begin{theorem}[\Mech{joinRight/is-bounded}]\label{thm:cost-joinRight-is-bounded}
  Let $t_1$, $a$, and $t_2$ be valid inputs to \codename{JoinRight}.
  Then, the cost of $\Call{JoinRight}{t_1, a, t_2}$ is bounded by $1 + 2(n_1 - n_2)$.
\end{theorem}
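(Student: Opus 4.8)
The plan is to induct on the tree $t_1$, following the case structure of \cref{alg:join-right} and counting the highlighted $\step$ annotations. The cost accounting is immediate from the code: the black base cases (Cases II, III, and IV, where $t_1$ is $\colorblack$ and $n_1 = n_2 + 1$) perform no recursive call and carry no annotation, so they cost $0$; the two recursive cases (Case I, where $t_1$ is $\colorred$, and Cases V/VI, where $t_1$ is $\colorblack$ and $n_1 > n_2 + 1$) each perform a single $\step[1]$ before exactly one recursive call on the right child $t_{1,2}$. Since no case forks, the work and span coincide throughout, and the total cost is simply the number of recursive calls made while descending the right spine of $t_1$.

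The crucial observation is that the stated bound $1 + 2(n_1 - n_2)$ is \emph{not} directly inductive: in Case I a $\colorred$ root spends a step but recurses into $t_{1,2}$, whose black height is still $n_1$ (a $\colorred$ node does not contribute to black height), so a naive use of the induction hypothesis would pay a step with no decrease in $n_1 - n_2$ and overshoot the bound. I would therefore strengthen the induction hypothesis to a color-refined pair of bounds:
\begin{itemize}
  \item if $t_1$ is $\colorblack$, the cost is at most $2(n_1 - n_2 - 1)$;
  \item if $t_1$ is $\colorred$, the cost is at most $2(n_1 - n_2) - 1$.
\end{itemize}
Because $n_1 > n_2$ on every call, both quantities are nonnegative, and each is at most $1 + 2(n_1 - n_2)$, so the theorem follows from the strengthened claim regardless of the color of the top-level $t_1$.

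Verifying the two inequalities against the recurrence is then routine. In Case I the child $t_{1,2}$ is forced to be $\colorblack$ (children of a $\colorred$ node are black, as the $\conname{red}$ constructor of $\typename{irbt}_\alpha$ enforces) with black height $n_1$, so by the black hypothesis the recursive call costs at most $2(n_1 - n_2 - 1)$; adding the one step gives $2(n_1 - n_2) - 1$, exactly the red bound. In Cases V/VI the child $t_{1,2}$ has black height $n_1 - 1 > n_2$ and may be either color, so taking the worse (red) case bounds the recursive call by $2((n_1 - 1) - n_2) - 1 = 2(n_1 - n_2 - 1) - 1$; adding the one step gives $2(n_1 - n_2 - 1)$, exactly the black bound. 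The black base cases cost $0 = 2(n_1 - n_2 - 1)$ when $n_1 = n_2 + 1$, closing the induction. The black-height bookkeeping in each case is supplied by the intrinsic indexing of $\typename{irbt}_\alpha$ together with \cref{lem:left-black-red}.

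The main obstacle is precisely this strengthening. The factor of $2$ in the bound reflects the alternation of colors down the right spine: each unit drop in black height (a $\colorblack$ step) may be preceded by a single $\colorred$ step that costs without reducing $n_1 - n_2$, so the red-root bound must sit exactly one above the black-root bound, not two. Getting that asymmetry right — using that a $\colorred$ node recurses directly into the black case — is what makes the induction close with tight constants, whereas a single uniform bound would fail to propagate.
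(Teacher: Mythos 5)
Your proof is correct and follows essentially the same route as the paper's: induction on $t_1$ along the structure of the code, strengthened to a color-refined pair of bounds in which the red-root bound sits exactly one above the black-root bound, paying one step per recursive case. The only difference is that your invariants ($2(n_1 - n_2) - 1$ for red roots, $2(n_1 - n_2 - 1)$ for black) are tighter by an additive constant of $2$ than the paper's ($1 + 2(n_1 - n_2)$ for red, $2(n_1 - n_2)$ for black); both versions close the induction and imply the stated bound.
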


\begin{proof}
  We prove a strengthened claim:
  \begin{enumerate}
    \item \label{lem:cost-joinRight-is-bounded-red} If $t_1$ is colored $\colorred$, the cost of $\Call{JoinRight}{t_1, a, t_2}$ is bounded by $1 + 2(n_1 - n_2)$.
    \item \label{lem:cost-joinRight-is-bounded-black} If $t_1$ is colored $\colorblack$, the cost of $\Call{JoinRight}{t_1, a, t_2}$ is bounded by $2(n_1 - n_2)$.
  \end{enumerate}
  The desired result follows immediately in both cases.
  Following the structure of the \codename{JoinRight} in \cref{alg:join-right}, we go by induction on $t_1$.

  \begin{enumerate}[label=\Roman*.]
    \item
      Since $t_1$ is colored $\colorred$, $t_{1,2}$ is black with $n_1 = n_{1,2}$, and we must prove \cref{lem:cost-joinRight-is-bounded-red}.
      This case incurs \costbox{$1$} cost in addition to the cost of the recursive call.
      The cost of the recursive call is bounded by $2(n_{1,2} - n_2) = 2(n_1 - n_2)$.
      Therefore, the cost of the entire computation is bounded by $1 + 2(n_1 - n_2)$, as desired.

    \item This case incurs zero cost.
    \item This case incurs zero cost.
    \item This case incurs zero cost.

    \item
      Since $t_1$ is colored $\colorblack$, $n_1 = n_{1,2} + 1$, and we must prove \cref{lem:cost-joinRight-is-bounded-black}.
      This case incurs \costbox{$1$} cost in addition to the cost of the recursive call.
      The color of $t_{1,2}$ is unknown, but in either case the cost of the recursive call is bounded by $1 + 2(n_{1,2} - n_2)$.
      Therefore, the cost of the entire computation is bounded by $2 + 2(n_{1,2} - n_2) = 2((n_{1,2} + 1) - n_2) = 2(n_1 - n_2)$, as desired.

    \item
      This case is the same as the previous case.
  \end{enumerate}
  In all cases, the desired result holds.
\end{proof}

\subsection{Cost of \seqjoin}

Using \cref{thm:cost-joinRight-is-bounded}, we may now reason about the cost of the full $\seqjoin$ implementation of \cref{alg:join}.
For notational convenience, we write
\begin{align*}
  \overline{x_1} &= \max(x_1, x_2) \\
  \overline{x_2} &= \min(x_1, x_2)
\end{align*}
since $\seqjoin$ behaves symmetrically depending on which tree is larger.

\begin{theorem}[\Mech{join/is-bounded}]\label{thm:cost-join-is-bounded}
  For all $t_1$, $a$, and $t_2$,
  the cost of $\Call{Join}{t_1, a, t_2}$ is bounded by $1 + 2(\overline{n_1} - \overline{n_2})$.
\end{theorem}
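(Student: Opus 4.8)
The plan is to case-split on the comparison of the two black heights $n_1$ and $n_2$, mirroring the three-way \Call{Compare}{n_1, n_2} branch in \cref{alg:join}, and in the two unbalanced cases to appeal directly to the cost bound on \codename{JoinRight} already established as \cref{thm:cost-joinRight-is-bounded}. The crucial structural observation, noted when \cref{alg:join} was introduced, is that \codename{Join} itself issues no $\step$ annotations: the recoloring of a red root to black is a purely structural rewrite, and the equal-height branch merely constructs a single node. Hence all of \codename{Join}'s cost is inherited from its one call to \codename{JoinRight} (or \codename{JoinLeft}), and the theorem reduces to repackaging \cref{thm:cost-joinRight-is-bounded} under the $\overline{\cdot}$ abbreviations.

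First I would dispatch the balanced case $n_1 = n_2$: here \codename{Join} performs no recursive call and incurs zero cost, while $\overline{n_1} = \overline{n_2}$ makes the target bound $1 + 2(\overline{n_1} - \overline{n_2}) = 1$, so $0 \le 1$ suffices. Next, for $n_1 > n_2$ we have $\overline{n_1} = n_1$ and $\overline{n_2} = n_2$; the cost of $\Call{Join}{t_1, a, t_2}$ equals that of the single call $\Call{JoinRight}{t_1, a, t_2}$ (the post-hoc recoloring being free), which by \cref{thm:cost-joinRight-is-bounded} is bounded by $1 + 2(n_1 - n_2) = 1 + 2(\overline{n_1} - \overline{n_2})$. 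Finally, the case $n_1 < n_2$ is symmetric: here $\overline{n_1} = n_2$ and $\overline{n_2} = n_1$, and the analogous bound on \codename{JoinLeft} yields $1 + 2(n_2 - n_1) = 1 + 2(\overline{n_1} - \overline{n_2})$. In every branch the stated bound holds.

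I do not expect a genuine obstacle here, since the argument is essentially a corollary of \cref{thm:cost-joinRight-is-bounded}; the only point requiring care is justifying that the recoloring step and the equal-height node construction contribute no cost, which follows because neither carries a $\step$ annotation in \cref{alg:join}. The $\overline{n_1}/\overline{n_2}$ notation is precisely what collapses the two symmetric unbalanced cases into a single uniform bound, so the main value of writing the proof is confirming that the $\max/\min$ bookkeeping lines up with the $n_1 - n_2$ form of the \codename{JoinRight} bound in each orientation.
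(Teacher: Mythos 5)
Your proposal is correct and follows the same route as the paper's proof: the equal-height case incurs no cost (trivially within the bound since $1 + 2(\overline{n_1} - \overline{n_2}) = 1 \ge 0$), and the unbalanced cases follow directly from \cref{thm:cost-joinRight-is-bounded} (or its symmetric \codename{JoinLeft} counterpart), with the $\overline{\cdot}$ notation absorbing the two orientations. The paper states this in two sentences; your elaboration of the $\max/\min$ bookkeeping and the zero-cost recoloring simply makes explicit what the paper leaves implicit.
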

\begin{proof}
  If $t_1$ and $t_2$ have the same black height, no cost is incurred, so the bound is trivially met.
  Otherwise, the result follows immediately from \cref{thm:cost-joinRight-is-bounded}.
\end{proof}

This validates the claim by \citet[\S 4.2]{Blelloch2022} that the cost of \seqjoin{} on red-black tree is in $\bigO{\length{h(t_1) - h(t_2)}}$, where $h(t)$ is the height of tree $t$.

Since black height is a property only understood in the implementation, rather than the abstract sequence interface, we wish to publicly characterize the cost of \seqjoin{} in terms of the lengths of the involved sequences.
To accomplish this, we bound the black height of a red-black tree in terms of the overall size of the tree, which we write $\length{t}$ for a tree $t$.

\begin{lemma}[\Mech{nodes/upper-bound}]\label{lem:nodes-upper-bound}
  For any red-black tree $t$ with black height $n$, we have
  $$n \le \ceil*{\log_2 (1 + \length{t})}.$$
\end{lemma}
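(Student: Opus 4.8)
The plan is to reduce the statement to the purely combinatorial fact that a red-black tree of black height $n$ must contain at least $2^n - 1$ nodes, i.e.
\[
  2^n \le 1 + \length{t}.
\]
Intuitively this says that, among all red-black trees of a fixed black height, the perfectly balanced all-black tree is the smallest: a $\colorred$ node only ever adds a key without contributing to the black height, so red nodes cannot help make a tree of a given black height smaller. Once this inequality is established, the lemma follows by taking base-2 logarithms.

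First I would prove the auxiliary inequality $2^n \le 1 + \length{t}$ by structural induction on $t$, following the three constructors of $\typename{irbt}_\alpha$ from \cref{code:irbt}; throughout, writing $t_1$ and $t_2$ for the two subtrees of a node, we use $\length{t} = \length{t_1} + 1 + \length{t_2}$. In the $\conname{leaf}$ case, $n = 0$ and $\length{t} = 0$, so $2^0 = 1 \le 1 + 0$. In the $\conname{red}$ case the two children are $\colorblack$ with black height $n$, so the inductive hypothesis gives $2^n \le 1 + \length{t_i}$ for each child; since the result again has black height $n$, it suffices to note
\[
  1 + \length{t} = 2 + \length{t_1} + \length{t_2} \ge 2 + (2^n - 1) + (2^n - 1) = 2^{n+1} \ge 2^n.
\]
The $\conname{black}$ case is arithmetically identical, except that the result has black height $n + 1$; the same computation yields $1 + \length{t} \ge 2^{n+1}$, which matches the required bound exactly. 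This is the tight case, and the induction goes through with no further work.

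Finally, since $\log_2$ is monotone and $x \le \ceil*{x}$ for every $x$, the auxiliary inequality gives
\[
  n = \log_2 2^n \le \log_2 (1 + \length{t}) \le \ceil*{\log_2 (1 + \length{t})},
\]
as desired. I expect the only genuinely delicate part to be this last step in the mechanization: $\ceil*{\log_2(-)}$ is not a primitive operation on $\nat$, so the formal proof cannot simply invoke monotonicity of a real-valued logarithm. Instead it must route through a discrete characterization of the ceiling logarithm — for instance, that $\ceil*{\log_2 m}$ is the least $k$ with $m \le 2^k$ — and re-establish its monotonicity together with the defining inequality $m \le 2^{\ceil*{\log_2 m}}$. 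By contrast, the structural induction itself is entirely routine; its only subtlety is keeping the black-height bookkeeping aligned with the constructors, since $\conname{red}$ preserves black height while $\conname{black}$ increments it, and it is precisely this increment that makes the $\conname{black}$ case the binding one.
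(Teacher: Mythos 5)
Your proof is correct and is essentially the same argument as the paper's (the paper states the lemma and defers the proof to the Agda mechanization, which likewise establishes $2^n \le 1 + \length{t}$ by structural induction on $\typename{irbt}_\alpha$ and then transfers the bound through the discrete ceiling logarithm). Your closing remark is also on target: the mechanized step does route through the characterization of $\ceil*{\log_2 m}$ via the defining inequality $m \le 2^{\ceil*{\log_2 m}}$ and monotonicity of $2^{(-)}$, rather than a real-valued $\log_2$.
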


\begin{lemma}[\Mech{nodes/lower-bound}]
  For any red-black tree $t$ with black height $n$, we have
  $$\floor*{\frac{\ceil*{\log_2(1 + \length{t})} - 1}{2}} \le n.$$
\end{lemma}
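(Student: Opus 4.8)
The plan is to invert the size--black-height relationship, exactly dual to the proof of \cref{lem:nodes-upper-bound}. That lemma uses the fact that a tree of black height $n$ has \emph{at least} $2^n - 1$ nodes; here I instead need the matching \emph{upper} bound on size, namely that any red-black tree of black height $n$ has at most $2^{2n+1} - 1 = 2\cdot 4^n - 1$ nodes. This bound is tight because a $\colorred$ node may be placed above every $\colorblack$ node without changing the black height, so the densest tree of black height $n$ has true height $2n+1$ and hence at most $2^{2n+1} - 1$ internal nodes.

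First I would establish this size bound by structural induction on $\typename{irbt}_\alpha$, following \cref{code:irbt}. As is typical in this development, the induction must be strengthened to track the root color: I would prove that a $\colorblack$-rooted tree (including the leaf) of black height $n$ has at most $4^n - 1$ nodes, while a $\colorred$-rooted tree of black height $n$ has at most $2\cdot 4^n - 1$ nodes. The leaf gives the base case $0 \le 4^0 - 1$. The $\conname{red}$ case has two $\colorblack$ children of height $n$, so its size is at most $1 + 2(4^n - 1) = 2\cdot 4^n - 1$. The $\conname{black}$ case has two children of arbitrary color and height $n$, each bounded by the larger red estimate $2\cdot 4^n - 1$, so its size is at most $1 + 2(2\cdot 4^n - 1) = 4^{n+1} - 1$, matching the black bound at height $n+1$. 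Taking the maximum over the two colors yields $\length{t} \le 2\cdot 4^n - 1 = 2^{2n+1} - 1$ for every red-black tree of black height $n$.

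The remaining steps are routine arithmetic. From $1 + \length{t} \le 2^{2n+1}$ I get $\log_2(1 + \length{t}) \le 2n + 1$; since $2n+1$ is an integer dominating the left-hand side, it also dominates its ceiling, giving $\ceil*{\log_2(1 + \length{t})} \le 2n + 1$. Rearranging yields $\frac{\ceil*{\log_2(1 + \length{t})} - 1}{2} \le n$, and applying the order-preserving floor to the left-hand side preserves the inequality against the integer $n$, which is the claim. The only genuine content is the color-refined induction of the second paragraph; the need to strengthen the inductive hypothesis so that the $\conname{black}$ case may appeal to the stronger red estimate for its children is the single subtlety, and it is exactly analogous to the strengthenings used elsewhere in this section. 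Everything after it is elementary, complicated at most by the truncated natural-number subtraction in the Agda mechanization.
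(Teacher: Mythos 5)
Your proposal is correct and takes essentially the same approach as the paper's (mechanized) proof: the paper's observation that a tree of black height $n$ has true height at most $2n+1$ is exactly your color-strengthened size bound $\length{t} \le 2^{2n+1} - 1$ (black-rooted $\le 4^n - 1$, red-rooted $\le 2 \cdot 4^n - 1$), from which the ceiling and floor manipulations follow as you describe. The induction checks out in all three constructor cases, so there is nothing to add.
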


Using these lemmas, we may give a user-facing description of the cost of \seqjoin{}.

\begin{theorem}[\Mech{join/is-bounded/nodes}]
  Let $t_1$, $a$, and $t_2$ be valid inputs to \codename{Join}.
  Then, the cost of $\Call{Join}{t_1, a, t_2}$ is bounded by
  $$1 + 2\left(\ceil*{\log_2(1 + \overline{\length{t_1}})} - \floor*{\frac{\ceil*{\log_2(1 + \overline{\length{t_2}})} - 1}{2}}\right).$$
\end{theorem}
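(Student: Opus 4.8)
The plan is to combine the work-span cost bound from \cref{thm:cost-join-is-bounded}, which is stated in terms of black heights, with the two bracketing lemmas (\cref{lem:nodes-upper-bound} and its lower-bound counterpart) that relate black height to tree size. The final statement is essentially a change of variables: I want to replace $\overline{n_1}$ and $\overline{n_2}$ in the bound $1 + 2(\overline{n_1} - \overline{n_2})$ by expressions depending only on $\overline{\length{t_1}}$ and $\overline{\length{t_2}}$, so that the cost is expressed purely in the user-facing vocabulary of sequence lengths rather than the internal notion of black height.

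The key observation is that the cost bound $1 + 2(\overline{n_1} - \overline{n_2})$ is monotonically increasing in $\overline{n_1}$ and monotonically decreasing in $\overline{n_2}$. Therefore, to produce a valid upper bound, I should substitute an \emph{upper} bound for $\overline{n_1}$ and a \emph{lower} bound for $\overline{n_2}$. First, I would apply \cref{lem:nodes-upper-bound} to the larger tree to obtain $\overline{n_1} \le \ceil*{\log_2(1 + \overline{\length{t_1}})}$. Second, I would apply the lower-bound lemma to the smaller tree to obtain $\floor*{\frac{\ceil*{\log_2(1 + \overline{\length{t_2}})} - 1}{2}} \le \overline{n_2}$. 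Substituting these two inequalities into the monotone bound, and using transitivity of $\le$, yields
$$
1 + 2(\overline{n_1} - \overline{n_2}) \le 1 + 2\left(\ceil*{\log_2(1 + \overline{\length{t_1}})} - \floor*{\frac{\ceil*{\log_2(1 + \overline{\length{t_2}})} - 1}{2}}\right),
$$
which is exactly the claimed bound.

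The main subtlety, and the step I expect to require the most care in the mechanization, is establishing the correspondence between $\overline{\cdot}$ applied to black heights and $\overline{\cdot}$ applied to sizes: one must confirm that the tree achieving $\overline{n_1} = \max(n_1, n_2)$ is the same tree that determines $\overline{\length{t_1}}$, i.e.\ that larger black height goes hand-in-hand with the correct size comparison under the $\max$/$\min$ abbreviations. Since the bracketing lemmas hold for \emph{every} red-black tree individually, the cleanest route is to invoke \cref{lem:nodes-upper-bound} on whichever of $t_1, t_2$ realizes $\overline{n_1}$ and the lower-bound lemma on whichever realizes $\overline{n_2}$, so that the abbreviations line up without needing to track which of $t_1, t_2$ is physically larger. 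Once the two applications are threaded through the monotonicity argument, the remainder is a routine transitivity chain with no further casework.
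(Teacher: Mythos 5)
Your proposal is correct and matches the paper's (implicit) proof exactly: the paper derives this theorem by substituting the upper bound of \cref{lem:nodes-upper-bound} for $\overline{n_1}$ and the corresponding lower bound for $\overline{n_2}$ into the monotone cost bound $1 + 2(\overline{n_1} - \overline{n_2})$ of \cref{thm:cost-join-is-bounded}. Your handling of the max/min alignment subtlety---applying each lemma to the tree realizing the black-height extremum and then using monotonicity of $x \mapsto \ceil*{\log_2(1+x)}$ to pass to $\overline{\length{t_1}}$ and $\overline{\length{t_2}}$---is precisely the right resolution, since a larger black height need not coincide with a larger node count.
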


This matches the expected cost bound, $$\bigO{\ceil*{\log_2\left(\overline{\length{t_1}} / \overline{\length{t_2}}\right)}}.$$

\section{Case study: algorithms on sequences}

An essential part of the work of \citet{Blelloch2016,Blelloch2022} and \citet{Sun2019} is showing how an implementation of the sequence signature gives rise to efficient implementations of other common algorithms on sequences when sequences are implemented as balanced trees.
Here, we consider the implementation and cost analysis of some such algorithms.
We implement each algorithm generically in terms of the sequence interface given in \cref{fig:seq-sig-cbpv}.
However, for the purpose of cost analysis, we break abstraction, inlining the sequence definitions.
Additionally, for readability, we replace uses of $\seqrec$ with a more familiar pattern matching notation.

\subsection{Sequence sum}

One simple algorithm on a sequence of natural numbers is a parallel sum, adding up the elements in linear work and logarithmic span with respect to the length of the sequence when counting recursive calls.
We give an implementation
$$\codename{Sum} : \seqty[\nat] \to \F{\nat}$$
in \cref{alg:sum}, adapting the definition from \cref{code:seq-examples} to the call-by-push-value setting and adding cost instrumentation and parallelism.
It goes by recursion using $\seqrec[\F{\nat}]$.
In the base case, $0$ is returned.
In the inductive case, it recursively sums both subsequences \emph{in parallel} and then returns the sum of the results and the middle datum.

\begin{algorithm}
  \begin{algorithmic}
  \Require
    \Statex $s : \seqty[\nat]$
  \Ensure
    \Statex $\Call{Sum}{s} : \F{\nat}$
  \Statex
    \Switch{$s$}
    \Case{$\seqempty$}
      \State \Return $0$
    \EndCase
    \Case{$\seqjoin{}(s_1, a, s_2)$}
      \Step 1
      \State $(x_1, x_2) \gets \para{\Call{Sum}{s_1}}{\Call{Sum}{s_2}}$
      \State \Return $x_1 + a + x_2$
    \EndCase
  \end{algorithmic}
  \caption{%
    Recursive \codename{Sum} algorithm for sequences.
    Pattern-matching syntax for $\seqempty$ and $\seqjoin$ is syntactic sugar for $\seqrec[\F{\nat}]$.
  }
  \label{alg:sum}
\end{algorithm}

When the implementation of sequences is specialized to red-black trees, we achieve the desired cost bound.

\begin{theorem}[\Mech{sum/bounded}]
  For all red-black trees $t$, the cost of $\Call{Sum}{t}$ is bounded by
  \begin{itemize}
    \item $\length{t}$ work (sequential cost) and
    \item $1 + 2\ceil*{\log_2 (1 + \length{t})}$ span (idealized parallel cost).
  \end{itemize}
\end{theorem}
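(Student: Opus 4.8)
The plan is to prove the two bounds separately by induction on the red-black tree $t$, following the structure of the \codename{Sum} algorithm in \cref{alg:sum}. Both the work and span bounds recur over the same tree structure, so in each case I would set up an induction on the inductive definition of $\typename{rbt}_\nat$, with a base case for $\conname{leaf}$ and inductive cases for $\conname{red}$ and $\conname{black}$ nodes. For concreteness, I would phrase the claim in the mechanized style as a statement that $\Call{Sum}{t}$ is bounded by the pair $(\length{t},\ 1 + 2\ceil*{\log_2(1 + \length{t})})$ in the sequential-and-parallel cost model, and discharge each component.

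For the work bound, the key observation is that each recursive call performs \costbox{1} cost for itself and then sequentially accumulates the work of both subtrees. On a leaf, the cost is $0 = \length{\nilex}$. On a node with subtrees $t_1$ and $t_2$ of sizes $\length{t_1}$ and $\length{t_2}$, the total work is $1 + \length{t_1} + \length{t_2}$ by the inductive hypotheses applied to each child; since $\length{t} = 1 + \length{t_1} + \length{t_2}$ for a node (counting the middle key), this is exactly $\length{t}$. This part is a routine structural induction that does not even use the red-black invariants.

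The span bound is the more delicate component and I expect it to be the main obstacle, because it is where the \para{\cdot}{\cdot} parallel combinator and the balance invariant interact. In the parallel cost model the span of $\para{\Call{Sum}{s_1}}{\Call{Sum}{s_2}}$ is $1$ plus the \emph{maximum} of the two child spans rather than their sum, so the recurrence for the span is $S(t) \le 1 + \max(S(t_1), S(t_2))$ on a node and $S(\conname{leaf}) = 0$; unfolded, this says the span is bounded by the true height of $t$ (up to the additive constant from the per-node \costbox{1}). The crux is then converting a bound in terms of true height into the stated bound in terms of $\length{t}$. Here I would invoke the relationship between black height and true height noted at the start of \cref{sec:cost} — a red-black tree of black height $n$ has true height at most $2n + 1$ — together with \cref{lem:nodes-upper-bound}, which gives $n \le \ceil*{\log_2(1 + \length{t})}$. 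Chaining these yields a true-height bound of $2\ceil*{\log_2(1 + \length{t})} + 1$, matching the claimed span of $1 + 2\ceil*{\log_2(1 + \length{t})}$. The subtlety to handle carefully is that the induction naturally produces a height-based bound, whereas the two lemmas are phrased on black height, so I would either strengthen the inductive hypothesis to track black height directly (mirroring the red/black case split used in \cref{thm:cost-joinRight-is-bounded}) or prove the height bound first and apply the size bound as a final, non-inductive step. The latter keeps the induction clean and isolates the logarithmic estimate to a single application of \cref{lem:nodes-upper-bound}.
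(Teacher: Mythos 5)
Your proposal is correct, and your first option is exactly what the paper does: its mechanized proof establishes the span bound by a strengthened induction on black height --- $1 + 2n$ in general, and $2n$ when the root is $\colorblack$ (the red case needs the stronger hypothesis on its black children, precisely mirroring \cref{thm:cost-joinRight-is-bounded}) --- and then applies \cref{lem:nodes-upper-bound} as a single non-inductive final step; the work bound is, as you say, immediate by structural induction. Your preferred alternative, proving the span is bounded by true height and then converting via ``height $\le 2n+1$'' plus \cref{lem:nodes-upper-bound}, is equally valid on paper, but note that it does not actually isolate the red-black reasoning the way you hope: the fact that true height is at most $2n+1$ is itself proved by the very same red/black-strengthened induction (height $\le 2n$ for a black root, $\le 2n+1$ for a red root), so the height-first route merely relocates that induction into an auxiliary lemma rather than avoiding it. In the mechanization there is also a concrete reason to prefer the paper's route: the black height $n$ is already an index of $\typename{irbt}_\alpha$, so the induction can be stated directly against it, whereas true height would be a new recursively defined measure requiring its own lemmas. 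What your route buys instead is a cleaner conceptual separation --- span-equals-depth as a balance-free fact about $\para{\cdot}{\cdot}$, with all red-black content confined to the height estimate --- which is a reasonable trade if one expects to reuse the depth bound for other recursions over the same trees.
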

\begin{proof}
  The sequential bound is immediate by induction.
  The parallel bound is shown using the black height, showing a bound of $1 + 2n$ (and a strengthened bound of $2n$ in case the tree is black) by induction.
  Then, \cref{lem:nodes-upper-bound} translates the bound from black height to the size of the tree.
\end{proof}

This matches the result of \citet{Blelloch2016,Blelloch2022}: linear work and logarithmic span.

\subsection{Finite set functions}\label{sec:finite-set}

\citet{Blelloch2016,Blelloch2022} consider implementations of standard functions on finite sets using balanced trees.
Here, we briefly show how such implementations could be provided in terms of the basic sequence signature of \cref{fig:seq-sig-cbpv}.

In order to implement a finite set as a sequence, we assume the element type $\alpha$ is equipped with a total order.
Then, standard functions on finite sets may be implemented using the recursor on sequences.
In \cref{fig:seq-more-examples}, we provide generic implementations of some examples:
\begin{enumerate}
  \item
    The \codename{Split} function splits a sorted sequence at a designated value, providing the elements of the sequence less than and greater than the value and, if it exists, the equivalent value.
  \item
    The \codename{Insert} function inserts a new value into the correct position in a sorted sequence, simply splitting the sequence at the desired value and joining the two sides around the new value.
  \item
    The \codename{Union} function takes the union of two sorted sequences, combining their elements to make a new sorted sequence.
\end{enumerate}
\citeauthor{Blelloch2016} study the efficiency of these and other similar algorithms is studied, showing that implementations in terms of $\seqempty$, $\seqjoin$, and $\seqrec$ have comparable efficiency to bespoke definitions.
We include the implementations of these algorithms in our mechanization, but we leave their cost and correctness verification to future work.

\begin{figure}
  \iblock{
    \mrow{\codename{Split} : \seqty[\alpha] \to \alpha \to \F{\prodty{\seqty[\alpha]}{\prodty{\optionty{\alpha}}{\seqty[\alpha]}}}}
    \mrow{\codename{Split}~s~a = }
    \mrow{\quad \seqrec[\F{\prodty{\seqty[\alpha]}{\prodty{\optionty{\alpha}}{\seqty[\alpha]}}}]}
    \mrow{\quad \quad \ret{\seqempty, \noneexp, \seqempty}}
    \mrow{\quad \quad (\lambda~s_1~r_1~a'~s_2~r_2.}
    \mrow{\quad \quad \quad \compare{a}{a'} }
    \mrow{\quad \quad \quad \quad \equal: \ret{s_1, \someexp{a}, s_2}}
    \mrow{\quad \quad \quad \quad \less: (s_{1,1}, a?, s_{1,2}) \leftarrow r_1;}
    \mrow{\quad \quad \quad \quad \quad\ \ s' \leftarrow \seqjoin(s_{1,2}, a', s_2);}
    \mrow{\quad \quad \quad \quad \quad\ \ \ret{s_{1,1}, a?, s'}}
    \mrow{\quad \quad \quad \quad \greater: (s_{2,1}, a?, s_{2,2}) \leftarrow r_2;}
    \mrow{\quad \quad \quad \quad \quad\ \ s' \leftarrow \seqjoin(s_1, a', s_{2,1});}
    \mrow{\quad \quad \quad \quad \quad \ \ \ret{s', a?, s_{2,2}})}
    \mrow{\quad \quad s}
  }
  \iblock{
    \mrow{\codename{Insert} : \seqty[\alpha] \to \alpha \to \F{\seqty[\alpha]}}
    \mrow{\codename{Insert} ~s~a = (s_1, a?, s_2) \leftarrow \codename{Split}~s~a; ~\seqjoin(s_1, a, s_2)}
  }
  \iblock{
    \mrow{\codename{Union} : \seqty \to \seqty \to \F{\seqty}}
    \mrow{\codename{Union} = }
    \mrow{\quad \seqrec[\seqty \to \F{\seqty}]}
    \mrow{\quad \quad (\lambda s.\ s)}
    \mrow{\quad \quad (\lambda~\textunderscore~f_1~a~\textunderscore~f_2.\ \lambda s_2.}
    \mrow{\quad \quad \quad (s_{2,1}, a?, s_{2,2}) \leftarrow \codename{Split}~s_2~a;}
    \mrow{\quad \quad \quad (u_1, u_2) \leftarrow \para{f_1~s_{2,1}}{f_2~s_{2,2}};}
    \mrow{\quad \quad \quad \seqjoin(u_1, a', u_2))}
  }

  \caption{Sample implementations of functions on sequences that use $\seqempty$, $\seqjoin$, and $\seqrec[\rho]$.}
  \label{fig:seq-more-examples}
\end{figure}
\section{Conclusion}\label{sec:concl}

In the work, we presented an implementation of the $\seqjoin$ algorithm on red-black trees \citet{Blelloch2016,Blelloch2022} whose correctness is intrinsically verified due to structural invariants within the type definition.
Our implementation was given in \Calf{}, instrumented with cost annotations to count the number of recursive calls performed; using the techniques developed by \citet{Niu2022}, we gave a formally verified precise cost bound proof for the $\seqjoin$ algorithm.

As noted by \citet{Blelloch2016,Blelloch2022}, balanced trees are an appealing choice for the implementation of persistent sequences.
Since the $\seqjoin$-based presentation of sequences provides an induction principle over the underlying balanced trees, where call-by-push-value suspends the results of recursive calls, we were able to implement standard functional algorithms on sequences and, following \citeauthor{Blelloch2016}, prove their efficient sequential and parallel cost bounds.

\subsection{Future work}

In this work, we begin to study parallel-ready data structures.
This suggests a myriad of directions for future work.

\paragraph{Full sequence library}
A natural next step following from this work would be the verification of correctness conditions and cost bounds on other algorithms included in persistent sequence libraries.

\paragraph{Finite sets and dictionaries}
Another common use case of balanced trees, as explored in depth by \citet{Blelloch2016,Blelloch2022}, is the implementation of finite sets and dictionaries by imposing and maintaining a total order on the data stored in the tree.
In \cref{sec:finite-set}, we briefly discuss the implementation of finite sets using sorted sequences; as future work, we hope to extend this development to a full-scale finite set library with cost and correctness verification.

\paragraph{Amortized complexity}
Although we study the binary \seqjoin{} operation on red-black trees in this work, more common historically is the single-element insertion operation.
Once the desired location for the new element is found, insertion into the tree along with any necessary rebalancing has asymptotically constant amortized cost \cite{tarjan:1983}.
We expect this result could be verified similarly to other amortized analyses in \Calf{} \cite{grodin-harper:2023}.

\paragraph{Various balancing schemes}
\citet{Blelloch2016,Blelloch2022} study a variety of tree balancing schemes, including AVL trees, weight-balanced trees, and treaps.
All of these balancing schemes match the sequence signature, as well; we hope to implement and verify these schemes in future work. Unlike red-black trees, some of these schemes cannot be implemented purely functionally, \eg{} treaps. This suggests an extension of \Calf{} that can better take effects into account.

\paragraph{Modular analysis of large-scale algorithms}
Many functional algorithms are implemented based on sequences, finite sets, and dictionaries \cite{ab-algorithms}.
However, in this work, we were forced to reveal the implementation of sequences as red-black trees in order to analyze the efficiency of algorithms implemented generically, such as \codename{Sum}.
In general, such analyses may even depend on particular hidden invariants within an implementation type; thus, we anticipate that analysis of larger-scale algorithms in this fashion would be intractable.
Going forward, we hope to further develop a theory of modularity for algorithm cost, allowing algorithms implemented in terms of abstract data types to be analyzed without fully revealing the implementation of the abstraction.
 
\begin{acks}
  We are grateful to Guy Blelloch for insightful discussions and advice about this work.

  This work was supported in part by the \grantsponsor{AFOSR}{United States Air Force Office of Scientific Research}{https://www.afrl.af.mil/AFOSR/} under grant number \grantnum{AFOSR}{FA9550-21-0009} (Tristan Nguyen, program manager) and the \grantsponsor{NSF}{National Science Foundation}{https://nsf.gov} under award number \grantnum{NSF}{CCF-1901381}.
  Any opinions, findings and conclusions or recommendations expressed in this material are those of the authors and do not necessarily reflect the views of the AFOSR or the NSF.
\end{acks}

\bibliography{my}


\begin{thebibliography}{18}


\ifx \showCODEN    \undefined \def \showCODEN     #1{\unskip}     \fi
\ifx \showDOI      \undefined \def \showDOI       #1{#1}\fi
\ifx \showISBNx    \undefined \def \showISBNx     #1{\unskip}     \fi
\ifx \showISBNxiii \undefined \def \showISBNxiii  #1{\unskip}     \fi
\ifx \showISSN     \undefined \def \showISSN      #1{\unskip}     \fi
\ifx \showLCCN     \undefined \def \showLCCN      #1{\unskip}     \fi
\ifx \shownote     \undefined \def \shownote      #1{#1}          \fi
\ifx \showarticletitle \undefined \def \showarticletitle #1{#1}   \fi
\ifx \showURL      \undefined \def \showURL       {\relax}        \fi
\providecommand\bibfield[2]{#2}
\providecommand\bibinfo[2]{#2}
\providecommand\natexlab[1]{#1}
\providecommand\showeprint[2][]{arXiv:#2}

\bibitem[\protect\citeauthoryear{Acar and Blelloch}{Acar and Blelloch}{2019}]%
        {ab-algorithms}
\bibfield{author}{\bibinfo{person}{Umut~A. Acar} {and} \bibinfo{person}{Guy~E.
  Blelloch}.} \bibinfo{year}{2019}\natexlab{}.
\newblock \bibinfo{booktitle}{\emph{Algorithms: Parallel and Sequential}}.
\newblock
\newblock
\shownote{\url{http://www.algorithms-book.com}.}


\bibitem[\protect\citeauthoryear{Appel}{Appel}{2011}]%
        {Appel2011}
\bibfield{author}{\bibinfo{person}{Andrew~W. Appel}.}
  \bibinfo{year}{2011}\natexlab{}.
\newblock \showarticletitle{Efficient Verified Red-Black Trees}.
\newblock  (\bibinfo{year}{2011}).
\newblock


\bibitem[\protect\citeauthoryear{Appel}{Appel}{2023}]%
        {Appel:SF3}
\bibfield{author}{\bibinfo{person}{Andrew~W. Appel}.}
  \bibinfo{year}{2023}\natexlab{}.
\newblock \bibinfo{booktitle}{\emph{Verified Functional Algorithms}}.
  \bibinfo{series}{Software Foundations}, Vol.~\bibinfo{volume}{3}.
\newblock \bibinfo{publisher}{Electronic textbook}.
\newblock
\urldef\tempurl%
\url{http://softwarefoundations.cis.upenn.edu}
\showURL{%
\tempurl}
\newblock
\shownote{Version 1.5.4.}


\bibitem[\protect\citeauthoryear{Blelloch and Greiner}{Blelloch and
  Greiner}{1995}]%
        {blelloch-greiner:1995}
\bibfield{author}{\bibinfo{person}{Guy Blelloch} {and} \bibinfo{person}{John
  Greiner}.} \bibinfo{year}{1995}\natexlab{}.
\newblock \showarticletitle{Parallelism in Sequential Functional Languages}. In
  \bibinfo{booktitle}{\emph{Proceedings of the Seventh International Conference
  on {{Functional}} Programming Languages and Computer Architecture}}
  \emph{(\bibinfo{series}{{{FPCA}} '95})}. \bibinfo{publisher}{{Association for
  Computing Machinery}}, \bibinfo{address}{{New York, NY, USA}},
  \bibinfo{pages}{226--237}.
\newblock
\showISBNx{978-0-89791-719-3}
\urldef\tempurl%
\url{https://doi.org/10.1145/224164.224210}
\showDOI{\tempurl}


\bibitem[\protect\citeauthoryear{Blelloch, Ferizovic, and Sun}{Blelloch
  et~al\mbox{.}}{2016}]%
        {Blelloch2016}
\bibfield{author}{\bibinfo{person}{Guy~E. Blelloch}, \bibinfo{person}{Daniel
  Ferizovic}, {and} \bibinfo{person}{Yihan Sun}.}
  \bibinfo{year}{2016}\natexlab{}.
\newblock \showarticletitle{Just Join for Parallel Ordered Sets}. In
  \bibinfo{booktitle}{\emph{Proceedings of the 28th ACM Symposium on
  Parallelism in Algorithms and Architectures}} (Pacific Grove, California,
  USA) \emph{(\bibinfo{series}{SPAA '16})}. \bibinfo{publisher}{Association for
  Computing Machinery}, \bibinfo{address}{New York, NY, USA},
  \bibinfo{pages}{253–264}.
\newblock
\showISBNx{9781450342100}
\urldef\tempurl%
\url{https://doi.org/10.1145/2935764.2935768}
\showDOI{\tempurl}


\bibitem[\protect\citeauthoryear{Blelloch, Ferizovic, and Sun}{Blelloch
  et~al\mbox{.}}{2022}]%
        {Blelloch2022}
\bibfield{author}{\bibinfo{person}{Guy~E. Blelloch}, \bibinfo{person}{Daniel
  Ferizovic}, {and} \bibinfo{person}{Yihan Sun}.}
  \bibinfo{year}{2022}\natexlab{}.
\newblock \showarticletitle{Joinable Parallel Balanced Binary Trees}.
\newblock \bibinfo{journal}{\emph{ACM Trans. Parallel Comput.}}
  \bibinfo{volume}{9}, \bibinfo{number}{2}, Article \bibinfo{articleno}{7}
  (\bibinfo{date}{apr} \bibinfo{year}{2022}), \bibinfo{numpages}{41}~pages.
\newblock
\showISSN{2329-4949}
\urldef\tempurl%
\url{https://doi.org/10.1145/3512769}
\showDOI{\tempurl}


\bibitem[\protect\citeauthoryear{Grodin and Harper}{Grodin and Harper}{2023}]%
        {grodin-harper:2023}
\bibfield{author}{\bibinfo{person}{Harrison Grodin} {and}
  \bibinfo{person}{Robert Harper}.} \bibinfo{year}{2023}\natexlab{}.
\newblock \showarticletitle{Amortized {{Analysis}} via {{Coinduction}}}. In
  \bibinfo{booktitle}{\emph{10th {{Conference}} on {{Algebra}} and
  {{Coalgebra}} in {{Computer Science}} ({{CALCO}} 2023)}}
  \emph{(\bibinfo{series}{Leibniz {{International Proceedings}} in
  {{Informatics}} ({{LIPIcs}})}, Vol.~\bibinfo{volume}{270})},
  \bibfield{editor}{\bibinfo{person}{Paolo Baldan} {and}
  \bibinfo{person}{Valeria {de Paiva}}} (Eds.). \bibinfo{publisher}{{Schloss
  Dagstuhl \textendash{} Leibniz-Zentrum f\"ur Informatik}},
  \bibinfo{address}{{Dagstuhl, Germany}}, \bibinfo{pages}{23:1--23:6}.
\newblock
\showISBNx{978-3-95977-287-7}
\showISSN{1868-8969}
\urldef\tempurl%
\url{https://doi.org/10.4230/LIPIcs.CALCO.2023.23}
\showDOI{\tempurl}


\bibitem[\protect\citeauthoryear{Guibas and Sedgewick}{Guibas and
  Sedgewick}{1978}]%
        {guibas-sedgewick:1978}
\bibfield{author}{\bibinfo{person}{Leo~J. Guibas} {and} \bibinfo{person}{Robert
  Sedgewick}.} \bibinfo{year}{1978}\natexlab{}.
\newblock \showarticletitle{A Dichromatic Framework for Balanced Trees}. In
  \bibinfo{booktitle}{\emph{19th {{Annual Symposium}} on {{Foundations}} of
  {{Computer Science}} (Sfcs 1978)}}. \bibinfo{pages}{8--21}.
\newblock
\showISSN{0272-5428}
\urldef\tempurl%
\url{https://doi.org/10.1109/SFCS.1978.3}
\showDOI{\tempurl}


\bibitem[\protect\citeauthoryear{Levy}{Levy}{2003}]%
        {levy:2003}
\bibfield{author}{\bibinfo{person}{Paul~Blain Levy}.}
  \bibinfo{year}{2003}\natexlab{}.
\newblock \bibinfo{booktitle}{\emph{Call-{{By-Push-Value}}: {{A
  Functional}}/{{Imperative Synthesis}}}}.
\newblock \bibinfo{publisher}{{Springer Netherlands}},
  \bibinfo{address}{{Dordrecht}}.
\newblock
\showISBNx{978-94-010-3752-5 978-94-007-0954-6}
\urldef\tempurl%
\url{https://doi.org/10.1007/978-94-007-0954-6}
\showDOI{\tempurl}


\bibitem[\protect\citeauthoryear{Licata}{Licata}{2013}]%
        {Licata2013}
\bibfield{author}{\bibinfo{person}{Dan Licata}.}
  \bibinfo{year}{2013}\natexlab{}.
\newblock \bibinfo{title}{Programming and Proving in Agda, OPLSS 2013}.
\newblock
\newblock
\urldef\tempurl%
\url{http://www.cs.cmu.edu/~drl/teaching/oplss13/}
\showURL{%
\tempurl}


\bibitem[\protect\citeauthoryear{Nipkow}{Nipkow}{2023}]%
        {nipkow:2023}
\bibfield{editor}{\bibinfo{person}{Tobias Nipkow}} (Ed.).
  \bibinfo{year}{2023}\natexlab{}.
\newblock \bibinfo{booktitle}{\emph{Functional {{Data Structures}} and
  {{Algorithms}}: {{A Proof Assistant Approach}}} (\bibinfo{edition}{june 27,
  2023} ed.)}.
\newblock
\urldef\tempurl%
\url{https://functional-algorithms-verified.org/}
\showURL{%
\tempurl}


\bibitem[\protect\citeauthoryear{Niu, Sterling, Grodin, and Harper}{Niu
  et~al\mbox{.}}{2022}]%
        {Niu2022}
\bibfield{author}{\bibinfo{person}{Yue Niu}, \bibinfo{person}{Jonathan
  Sterling}, \bibinfo{person}{Harrison Grodin}, {and} \bibinfo{person}{Robert
  Harper}.} \bibinfo{year}{2022}\natexlab{}.
\newblock \showarticletitle{A Cost-Aware Logical Framework}.
\newblock \bibinfo{journal}{\emph{Proc. ACM Program. Lang.}}
  \bibinfo{volume}{6}, \bibinfo{number}{POPL}, Article \bibinfo{articleno}{9}
  (\bibinfo{date}{jan} \bibinfo{year}{2022}), \bibinfo{numpages}{31}~pages.
\newblock
\urldef\tempurl%
\url{https://doi.org/10.1145/3498670}
\showDOI{\tempurl}


\bibitem[\protect\citeauthoryear{Norell}{Norell}{2009}]%
        {norell:2009}
\bibfield{author}{\bibinfo{person}{Ulf Norell}.}
  \bibinfo{year}{2009}\natexlab{}.
\newblock \showarticletitle{Dependently Typed Programming in {Agda}}. In
  \bibinfo{booktitle}{\emph{Proceedings of the 4th International Workshop on
  Types in Language Design and Implementation}} \emph{(\bibinfo{series}{TLDI
  '09})}. \bibinfo{address}{Savannah, GA, USA}, \bibinfo{pages}{1--2}.
\newblock
\showISBNx{978-1-60558-420-1}


\bibitem[\protect\citeauthoryear{Okasaki}{Okasaki}{1999}]%
        {okasaki:1999}
\bibfield{author}{\bibinfo{person}{Chris Okasaki}.}
  \bibinfo{year}{1999}\natexlab{}.
\newblock \showarticletitle{Red-Black Trees in a Functional Setting}.
\newblock \bibinfo{journal}{\emph{Journal of Functional Programming}}
  \bibinfo{volume}{9}, \bibinfo{number}{4} (\bibinfo{date}{July}
  \bibinfo{year}{1999}), \bibinfo{pages}{471--477}.
\newblock
\showISSN{1469-7653, 0956-7968}
\urldef\tempurl%
\url{https://doi.org/10.1017/S0956796899003494}
\showDOI{\tempurl}


\bibitem[\protect\citeauthoryear{Sun}{Sun}{2019}]%
        {Sun2019}
\bibfield{author}{\bibinfo{person}{Yihan Sun}.}
  \bibinfo{year}{2019}\natexlab{}.
\newblock \emph{\bibinfo{title}{Join-based Parallel Balanced Binary Trees}}.
\newblock \bibinfo{thesistype}{Ph.D. Dissertation}. \bibinfo{school}{Carngie
  Mellon University}.
\newblock


\bibitem[\protect\citeauthoryear{Tarjan}{Tarjan}{1983}]%
        {tarjan:1983}
\bibfield{author}{\bibinfo{person}{Robert~Endre Tarjan}.}
  \bibinfo{year}{1983}\natexlab{}.
\newblock \showarticletitle{4. {{Search Trees}}}.
\newblock In \bibinfo{booktitle}{\emph{Data {{Structures}} and {{Network
  Algorithms}}}}. \bibinfo{publisher}{{Society for Industrial and Applied
  Mathematics}}, \bibinfo{pages}{45--57}.
\newblock
\showISBNx{978-0-89871-187-5}
\urldef\tempurl%
\url{https://doi.org/10.1137/1.9781611970265.ch4}
\showDOI{\tempurl}


\bibitem[\protect\citeauthoryear{Wang, Wang, and Chlipala}{Wang
  et~al\mbox{.}}{2017}]%
        {wang-wang-chlipala:2017}
\bibfield{author}{\bibinfo{person}{Peng Wang}, \bibinfo{person}{Di Wang}, {and}
  \bibinfo{person}{Adam Chlipala}.} \bibinfo{year}{2017}\natexlab{}.
\newblock \showarticletitle{{{TiML}}: A Functional Language for Practical
  Complexity Analysis with Invariants}.
\newblock \bibinfo{journal}{\emph{Proceedings of the ACM on Programming
  Languages}} \bibinfo{volume}{1}, \bibinfo{number}{OOPSLA}
  (\bibinfo{date}{Oct.} \bibinfo{year}{2017}), \bibinfo{pages}{79:1--79:26}.
\newblock
\urldef\tempurl%
\url{https://doi.org/10.1145/3133903}
\showDOI{\tempurl}


\bibitem[\protect\citeauthoryear{Weirich}{Weirich}{2014}]%
        {weirich:2014}
\bibfield{author}{\bibinfo{person}{Stephanie Weirich}.}
  \bibinfo{year}{2014}\natexlab{}.
\newblock \showarticletitle{Depending on Types}. In
  \bibinfo{booktitle}{\emph{Proceedings of the 19th {{ACM SIGPLAN}}
  International Conference on {{Functional}} Programming}}
  \emph{(\bibinfo{series}{{{ICFP}} '14})}. \bibinfo{publisher}{{Association for
  Computing Machinery}}, \bibinfo{address}{{New York, NY, USA}},
  \bibinfo{pages}{241}.
\newblock
\showISBNx{978-1-4503-2873-9}
\urldef\tempurl%
\url{https://doi.org/10.1145/2628136.2631168}
\showDOI{\tempurl}


\end{thebibliography}

\end{document}